\newcommand\BibTeX{{\rmfamily B\kern-.05em \textsc{i\kern-.025em b}\kern-.08em
T\kern-.1667em\lower.7ex\hbox{E}\kern-.125emX}}
\newtheorem{remark}{Remark}
\newtheorem{lemma}{Lemma}
\newtheorem{proposition}{Proposition}
\def\begequarr{\begin{eqnarray}}
\def\endequarr{\end{eqnarray}}
\def\begequarrs{\begin{eqnarray*}}
\def\endequarrs{\end{eqnarray*}}
\def\begarr{\begin{array}}
\def\endarr{\end{array}}
\def\begequ{\begin{equation}}
\def\endequ{\end{equation}}
\def\lab{\label}
\def\begdes{\begin{description}}
\def\enddes{\end{description}}
\def\begenu{\begin{enumerate}}
\def\begite{\begin{itemize}}
\def\endite{\end{itemize}}
\def\endenu{\end{enumerate}}
\def\lef[{\left[\begin{array}}
\def\rig]{\end{array}\right]}
\def\qed{\hfill$\Box \Box \Box$}
\def\begcen{\begin{center}}
\def\endcen{\end{center}}
\def\begrem{\begin{remark}\rm}
\def\endrem{\end{remark}}
\def\begmat#1{\begin{bmatrix}#1\end{bmatrix}}
\def\begali#1{\begin{align}{#1}\end{align}}
\def\begalis#1{\begin{align*}{#1}\end{align*}}
\def\cale{{\cal E}}
\def\cali{{\cal I}}
\def\calb{{\cal B}}
\def\cale{{\cal E}}
\def\calp{{\cal P}}
\def\call{{\cal L}}
\def\L2e{{\cal L}_{2e}}
\def\rea{\mathbb{R}}
\def\col{\mbox{col}}
\def\hal{{1 \over 2}}
\def\vida{u_{\tt{ES}}}
\def\vpd{u_{\tt{EPD}}}
\def\L2e{{\cal L}_{2e}}
\def\rea{\mathbb{R}}
\def\col{\mbox{col}}
\def\hal{{1 \over 2}}
\def\cl{{\rm cl}}
\def\IJRNLC{{\it Int. J. of Robust and Nonlinear Control}}
\def\TAC{{\it IEEE Trans. Automatic Control}}
\def\EJC{{\it European Journal of Control}}
\def\CDC{{\it IEEE Conference on Decision and Control}}
\def\SCL{{\it Systems \& Control Letters}}
\def\AUT{{\it Automatica}}
\begin{document}

\runningheads{B. Yi~\MakeLowercase{\it et al}.}{Smooth, time-invariant regulation of nonholonomic systems}

\title{Smooth, time-invariant regulation of nonholonomic systems via energy pumping-and-damping\footnotemark[2]}

\author{Bowen Yi$^{1,2}$, Romeo Ortega$^{2,3}$, Weidong Zhang$^1$\corrauth}

\address{\center
{\rm 1.} Department of Automation, Shanghai Jiao Tong University, Shanghai 200240, China
\\
{\rm 2.} Laboratoire des Signaux et Syst\'emes, CNRS-CentraleSup\'elec, Gif-sur-Yvette 91192, France
\\
{\rm 3.} Department of Control Systems and Informatics, ITMO University, St. Petersburg 197101, Russia
}

\corraddr{Shanghai Jiao Tong University, Shanghai 200240, China (\texttt{wdzhang@sjtu.edu.cn})}

\begin{abstract}
In this paper we propose an energy pumping-and-damping technique to regulate nonholonomic systems described by kinematic models. The controller design follows the widely popular interconnection and damping assignment passivity-based methodology, with the free matrices partially structured. Two asymptotic regulation objectives are considered: drive to {\it zero} the state or drive the systems {\it total energy} to a desired constant value. In both cases, the control laws are {\it smooth, time-invariant}, state-feedbacks.  For the nonholonomic integrator we give an almost global solution for both problems, with the objectives ensured for all system initial conditions starting outside a set that has zero Lebesgue measure and is nowhere dense. For the general case of higher-order nonholonomic systems in chained form, a local stability result is given. Simulation results comparing the performance of the proposed controller with other existing designs are also provided.
\end{abstract}

\keywords{nonholonomic systems, passivity-based control, interconnection and damping assignment, energy pumping-and-damping}

\footnotetext[2]{the National Natural Science Foundation of China (61473183, U1509211, 61627810), National Key R\&D Program of China (SQ2017YFGH001005), China Scholarship Council, and by the Government of the Russian Federation (074U01), the Ministry of Education and Science of Russian Federation (14.Z50.31.0031, goszadanie no. 8.8885.2017/8.9).}

\maketitle

\section{Introduction}
\lab{sec1}
The study of mechanical system subject to nonholonomic constraints has been carried-out within the realm of  analytical mechanics \cite{Bloch2003,Borisov2002}. The complexity and highly nonlinear dynamics of nonholonomic mechanical systems  make the motion control problem challenging \cite{Bloch2003}. A key feature that distinguishes the control of nonholonomic systems from that of holonomic systems is that in the former, it is not possible to render \emph{asymptotically stable} an isolated equilibrium with a smooth (or even continuous), time-invariant (static or dynamic), state-feedback control law. The best one can achieve with smooth control laws is to stabilise an equilibrium manifold \cite{MASSCH}. This obstacle stems from Brockett's necessary condition for asymptotic stabilization \cite{BRO}---see also \cite{Bloch2003}. In view of the aforementioned limitation, time-varying feedback \cite{POM,JIAetal}, discontinuous feedback \cite{AST,FUJetal}, switching control methods \cite{LIB} and hybrid systems approaches \cite{HESMOR}, have been considered in the control literature. In this paper we are interested in investigating the possibilities of regulating nonholonomic systems via smooth, time-invariant state-feedback.

In the 1998 paper \cite{ESCetal} a radically new approach to regulate the behaviour of nonholonomic systems was proposed. The work was inspired by the classical field-oriented control (FOC) of induction motors, which was introduced in the drives community in 1972 \cite{BLA}, and is now the \emph{de facto} standard in all high-performance applications of electric drives---see \cite{MARbook} for a modern control-oriented explanation of the method. The basic idea of FOC is to regulate, with a smooth, time invariant, state-feedback law, the speed (or the torque) of the motor by inducing an oscillation, with the desired \emph{frequency and amplitude}, to the motors magnetic flux, that is a two-dimensional vector. From the physical viewpoint this is tantamount to controlling the mechanical energy via the regulation of the magnetic energy. As shown in \cite{ESCetal}, applying this procedure to the nonholonomic integrator allows us to drive the state to an arbitrarily small neighborhood of the origin as well as solving trajectory tracking problems. Unfortunately, when the objective is to drive the state to \emph{zero}, the control law includes the division by a state-dependent signal---rendering the controller not-globally defined. Although this signal is bounded away from zero \emph{along trajectories}, in the face of noise or parameter uncertainty, it may cross through zero, putting a question mark on the robustness of the design. It should be mentioned that the results of \cite{ESCetal} were later adopted in \cite{DIXetal} and are the inspiration for the transverse function approach pursued in \cite{MORSAMtac,MORSAM}.

In \cite{YIetal} it is shown that FOC can be interpreted as an Interconnection and Damping Assignment Passivity-based controller (IDA-PBC) \cite{ORTetal} that assigns a port-Hamiltonian (pH) structure to the closed-loop. The corresponding energy function has the shape of a ``Mexican sombrero", whose minimum is achieved in the periodic orbit that we want to reach, \emph{e.g.}, $H_\ell(x_\ell)=\beta_\ell$, with $x_\ell$ part of the state coordinates, whose energy function is $H_\ell(x_\ell)$, and $\beta_\ell$ is a positive, tuning constant---see Figure \ref{figsom}. The same approach was proposed in \cite{GOMetal} to induce an oscillation in the Ball-and-Beam system and in \cite{DUISTR} in walking robot applications. To assign the Mexican sombrero shape the energy function contains a term of the form $(H_\ell(x_\ell)-\beta_\ell)^2$, whose gradient can be transferred to the dissipation matrix of the pH system, giving then an interpretation of ``Energy Pumping-and-Damping" (EPD). That is, a controller that injects or extracts energy from the system depending on the location of the state with respect to the desired oscillating trajectory, see Fig \ref{fig_pd}.  This point of view was adopted in \cite{ASTetal} to design a controller that swings up---without switching---the cart-pendulum system. In the sequel, we will refer to this controller design technique as \emph{EPD IDA-PBC}, that is, a variation of IDA-PBC where the (otherwise free) dissipation matrix is partially structured. EPD IDA-PBC  has been used in \cite{YIetal} to solve the more general \emph{orbital stabilization} problem, where we made the important observation that, by setting $\beta_\ell=0$, we can achieve \emph{regulation to zero} of the state.\\

The main objective of this paper is to show that an EPD IDA-PBC formulation of the scheme proposed in \cite{ESCetal} provides a suitable framework for the solution of the following problems:

\begin{itemize}
  \item Find a globally defined, \emph{smooth, time-invariant} state-feedback that achieves either one of the following asymptotic regulation objectives for nonholonomic systems: drive to \emph{zero} the state or drive the systems \emph{total energy} to a desired constant value.
\end{itemize}

The objectives should be ensured for initial conditions starting sufficiently close to the desired objective but outside a set which has zero Lebesgue measure and is nowhere dense.\footnote{Clearly, to comply with Brockett's necessary condition, in the case of regulation to zero this set should contain the origin.} Following standard practice, the qualifier \emph{``almost"} will be used to underscore the latter feature.

For the nonholonomic integrator we give an \emph{almost global} solution for both problems---that is, \emph{all} trajectories starting outside a zero-measure set converge to their desired value. For the general case of higher-order nonholonomic systems in chained form, it is shown that the EPD IDA-PBC matching equation is always solvable, and a local result is given.

\begin{figure}
    \centering
    \includegraphics[width=5cm]{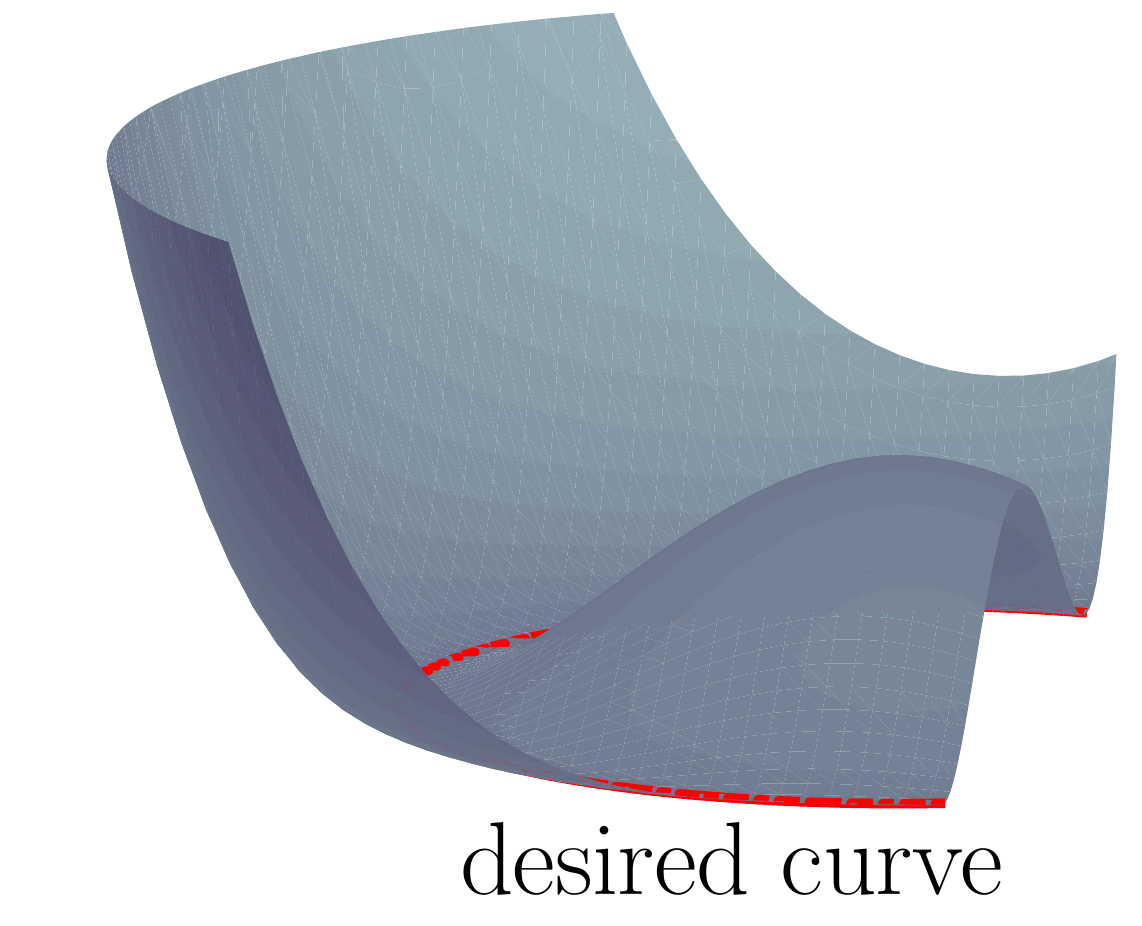}
    \caption{Shape of the energy function assigned by the FOC, with the desired periodic orbit in red}
    \label{figsom}
  \end{figure}

The reminder of the paper is organized as follows. In Section \ref{sec2} we introduce the problem formulation and the EPD IDA-PBC method to achieve \emph{almost global regulation} of nonholonomic systems in its general form. In Section \ref{sec3} we give the constructive solutions for the nonholonomic integrator, which are extended to high-order nonholonomic systems in chained form in Section \ref{sec4}. The paper is wrapped-up with simulations results in Section \ref{sec5} and concluding remarks in Section \ref{sec6}.\\

\textbf{Notation.} $I_n$ is the $n\times n$ identity matrix. For $x \in \rea^n$, $W \in \rea^{n \times n}$, $W=W^\top>0$, we denote the Euclidean norm $|x|^2:=x^\top x$, and the weighted-norm $\|x\|^2_W:=x^\top W x$. All mappings are assumed smooth. Given a function $H: \rea^n \to \rea$ we define the differential operator $\nabla H(x):=\left({\partial H\over \partial x}\right)^\top$.

\section{Regulation of Nonholonomic Systems via EPD IDA-PBC}
\lab{sec2}
%
In this paper, we adopt the driftless system representation of the nonholonomic system
\begin{equation}
\label{model_mech}
  \dot{x} = S(x) u,
\end{equation}
with $x \in \rea^n$ the generalized position, $u\in \rea^m$ the velocity vector, which is the control input, $n>m$, and the mapping $S: \rea^n \to \rea^{n\times m}$. The corresponding constraint is
\begin{equation}
\label{constraint}
  A^\top(x)\dot x = 0
\end{equation}
with $A: \rea^n \to \rea^{n \times (n-m)}$ full-rank. It is assumed that the system is completely nonholonomic, hence controllable. We refer the reader to \cite{Bloch2003} for further details on nonholonomic systems.

The proposition below shows that the problem of {regulation} of the system \eqref{model_mech} can be recast as an EPD IDA-PBC design. Following the ``FOC approach" advocated in \cite{ESCetal}---see also \cite{YIetal}---the idea is to decompose the state of the system into two components as\footnote{See Remark \ref{rem1} for the case of an arbitrary state partition.}
\begequ
\lab{parsta}
\begmat{ x_\ell \\ x_0}=x,\;x_\ell \in \rea^{n_\ell},\;x_0 \in \rea^{n_0}
\endequ
with $n=n_0+n_\ell$ and to find a smooth state-feedback that transforms the closed-loop dynamics into a pH system of the form
\begin{equation}
\label{cloloo}
 \begmat{ \dot x_\ell \\ \dot x_0} =  \begmat{ J_\ell(x) - R_\ell(x) & 0 \\ 0 & J_0(x) - R_0(x)} \nabla H(x),
\end{equation}
where the total energy function is given by
$$
H(x):= H_\ell(x_\ell) + H_0(x_0),
$$
with  $H_0:\rea^{n_0} \to \rea,\;H_\ell:\rea^{n_\ell} \to \rea$ and the interconnection and damping matrices
\begali{
\lab{intdammat}
&J_0:\rea^n \to \rea^{n_0 \times n_0},\;J_\ell:\rea^n \to \rea^{n_\ell \times n_\ell},\;R_0:\rea^{n} \to \rea^{n_0 \times n_0},\;R_\ell:\rea^{n} \to \rea^{n_\ell \times n_\ell}
}
satisfying
\begali{
\lab{conintdam}
&J_0(x)=-J_0^{\top}(x),\;J_\ell(x)=-J_\ell^{\top}(x),\;R_0(x)=R_0^{\top}(x) \geq 0.
}
The control objectives are to ensure that
\begequ
\lab{concon}
\lim_{t\to\infty} H_\ell(x_\ell(t)) = \beta_\ell>0, \quad \lim_{t\to\infty}x_0(t) =0,
\endequ
or
\begequ
\lab{concon0}
\lim_{t\to\infty} x(t) =0.
\endequ

As seen in the proposition below these objectives are achieved making the trajectory converge to the curve $H_\ell(x_\ell(t)) = \beta_\ell$ via the EPD principle, where $\beta_\ell>0$ in the first case and $\beta_\ell=0$ to regulate the state to zero. The EPD principle imposes the following constraint on $R_\ell(x)$:
  \begali{
  \lab{sigrl}
 [R_\ell(x) +  R^{\top}_\ell(x)] H^{\tt s}_\ell(x_\ell)  & \geq 0,
  }
where we defined the (shifted) energy function
\begequ
\lab{tilhl}
H^{\tt s}_\ell(x_\ell):= H_\ell(x_\ell) - \beta_\ell.
\endequ

To streamline the presentation of the result we partition the matrix $A(x)$ as
\begequ
\lab{a}
\begin{aligned}
& A(x)=\begmat{A_\ell(x) \\ A_0(x)},
\end{aligned}
\endequ
with $ A_\ell:\rea^n \to \rea^{n_\ell \times (n-m)}$ and $A_0:\rea^n \to \rea^{n_0 \times (n-m)}$.
%
\begin{proposition}
\rm\label{pro1}
Consider the system \eqref{model_mech} and the state partition \eqref{parsta}. Fix $\beta_\ell\ge0$. Assume there exist energy functions $H_0(x_0),\;H_\ell(x_\ell)$ and interconnection and damping matrices \eqref{intdammat}, \eqref{conintdam} verifying the following conditions.
\begenu
\item[{\bf C1.}] The matching PDE
\begin{equation}\label{pde_general}
 A_\ell^\top(x) \Big[J_\ell(x) - R_\ell(x) \Big] \nabla H_\ell(x_\ell)+ A_0^\top(x) \Big[J_0(x) - R_0(x) \Big] \nabla H_0(x_0)  =0.
\end{equation}
\item[{\bf C2.}] The EPD condition \eqref{sigrl}.
\item[{\bf C3.}] The minimum condition
\begequ
\lab{minconh}
{\nabla H(x)|_{x=0} = 0, ~\quad \nabla^2H(x) >0,}
\endequ
and the origin is isolated.
\item[{\bf C4.}] Define the function
\begequ
\lab{q}
Q(x):=\|\nabla H_0(x_0)\|^2_{R_0(x)} + \hal H^{\tt s}_\ell(x_\ell) \|\nabla H_\ell(x_\ell)\|^2_{R_\ell(x) +  R^{\top}_\ell(x)}.
\endequ
For the system \eqref{cloloo}, there exists a function $h:\rea^n \to \rea$ such that the following detectability-like implication holds
\begali{
\label{detcon}
\big[{ \lim_{t\to\infty} Q(x(t))= 0} \;\mbox{ and }\; x(0) \notin \cali\big] \; \Rightarrow \; \eqref{concon} \mbox{~~[{or \eqref{concon0}, respectively}]},
}
with $\cali:=\{x \in \rea^n\;|\;  h(x)= 0 \}$ the {\em inadmissible} initial condition set.
\endenu

Assume the initial conditions of the system are outside the set $\cali$. Then, the control law
\begin{equation}
\label{u}
  u = [S^\top  (x) S (x)]^{-1}S^\top  (x) \begmat{\big(J_\ell(x) - R_\ell(x) \big) \nabla H_\ell(x_\ell) \\ \big(J_0(x) - R_0(x) \big) \nabla H_0(x_0) }.
\end{equation}
ensures \eqref{concon} when $\beta_\ell>0$ or  \eqref{concon0} when $\beta_\ell=0$.
\end{proposition}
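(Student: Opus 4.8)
The plan is to split the argument into two stages: first verify that the static feedback \eqref{u} exactly assigns the target closed loop \eqref{cloloo}, and then analyse \eqref{cloloo} by an invariance argument built on a non-obvious scalar certificate. For the first stage, comparing $\dot x=S(x)u$ with the constraint \eqref{constraint} forces $A^\top(x)S(x)=0$; since $S(x)$ has full column rank $m$ (as implicit in the well-posedness of \eqref{u}) and $A(x)$ has full rank $n-m$, a dimension count gives $\im S(x)=\ker A^\top(x)$. Writing the right-hand side of \eqref{cloloo} as the map
\begequ
F(x):=\begmat{\big(J_\ell(x)-R_\ell(x)\big)\nabla H_\ell(x_\ell) \\ \big(J_0(x)-R_0(x)\big)\nabla H_0(x_0)}
\endequ
(which is legitimate because $H$ is separable, so $\nabla H(x)=\col\big(\nabla H_\ell(x_\ell),\nabla H_0(x_0)\big)$), the matching PDE \textbf{C1}, that is \eqref{pde_general}, reads exactly $A^\top(x)F(x)=0$, hence $F(x)\in\ker A^\top(x)=\im S(x)$ for every $x$. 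Therefore $S(x)u=F(x)$ has a (unique, by full column rank) solution, namely $u=[S^\top(x)S(x)]^{-1}S^\top(x)F(x)$, which is \eqref{u}; substituting it back into $\dot x=S(x)u$ reproduces \eqref{cloloo}.

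For the second stage, the key observation is that the total energy $H$ is \emph{not} a Lyapunov function: differentiating along \eqref{cloloo} and cancelling the $J$-terms by skew-symmetry yields $\dot H=-\nabla H_\ell^\top R_\ell\nabla H_\ell-\|\nabla H_0\|^2_{R_0}$, whose first term is sign-indefinite --- this is precisely the pumping-and-damping effect. I would instead use the \emph{squared shifted energy}
\begequ
W(x):=\hal\big[H^{\tt s}_\ell(x_\ell)\big]^2+H_0(x_0),
\endequ
with $H^{\tt s}_\ell$ as in \eqref{tilhl}. Differentiating $W$ along \eqref{cloloo}, dropping the $J$-terms and replacing $R_\ell$ by its symmetric part inside the quadratic form, one obtains exactly $\dot W=-Q(x)$ with $Q$ the function in \eqref{q}. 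By the EPD condition \textbf{C2}, that is \eqref{sigrl}, the matrix $[R_\ell+R_\ell^\top]H^{\tt s}_\ell$ is positive semidefinite, so $H^{\tt s}_\ell\,\|\nabla H_\ell\|^2_{R_\ell+R_\ell^\top}\ge0$; together with $R_0\ge0$ from \eqref{conintdam} this gives $Q(x)\ge0$, hence $\dot W\le0$.

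To finish, I would run the invariance principle. Since $W$ is bounded below and proper on the relevant sublevel set (the role of \textbf{C3}), the trajectory stays in the compact set $\Omega:=\{x:W(x)\le W(x(0))\}$; with all maps smooth, $x(\cdot)$ and $\dot x(\cdot)$ are then bounded, $W(x(t))$ converges, $\int_0^\infty Q(x(t))\,dt<\infty$, and $Q(x(\cdot))$ is uniformly continuous, so Barbalat's lemma (equivalently, LaSalle's invariance principle on $\Omega$) gives $\lim_{t\to\infty}Q(x(t))=0$. At this point one only knows that $x(t)$ approaches the set $\{Q=0\}$, which in general is much larger than the target; the remaining gap is bridged \emph{by hypothesis}: since $x(0)\notin\cali$, the detectability implication \textbf{C4}, \eqref{detcon}, converts $Q(x(t))\to0$ into \eqref{concon} when $\beta_\ell>0$ and into \eqref{concon0} when $\beta_\ell=0$, which is the claim.

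The one genuinely delicate step is the last one. Because $\dot W$ vanishes on the (potentially large) set $\{Q=0\}$ rather than only on the target set, no classical Lyapunov/LaSalle conclusion is available directly, and all the system-specific content --- including the very identification of the inadmissible set $\cali$ --- is packaged into \textbf{C4}; checking \textbf{C4} is exactly what the constructive developments must do (for the nonholonomic integrator in Section \ref{sec3}, for chained forms in Section \ref{sec4}). By comparison, the closed-loop assignment in the first stage is routine once one recognises that \eqref{pde_general} is the condition $F(x)\in\im S(x)$, and \textbf{C3} enters only to keep trajectories bounded.
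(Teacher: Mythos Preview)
Your proposal is correct and follows essentially the same approach as the paper: the same certificate $W(x)=\hal[H^{\tt s}_\ell(x_\ell)]^2+H_0(x_0)$ (the paper calls it $V$), the identity $\dot W=-Q(x)\le0$ via \textbf{C2}, Barbalat to force $Q(x(t))\to0$, and then \textbf{C4} to close. Your first-stage justification (showing $F(x)\in\ker A^\top(x)=\im S(x)$ from \eqref{pde_general}) is more explicit than the paper's ``simple calculations,'' but the route is the same.
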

\begin{proof}\rm
Some simple calculations show that the closed-loop dynamics takes the pH form \eqref{cloloo}. Define the function
\begequ
\lab{v}
V(x) = {1\over2} |H^{\tt s}_\ell(x_\ell)|^2 + H_0(x_0),
\endequ
the derivative of which is
\begequ
\label{dotv}
\dot{V}  =  - Q(x)\leq 0,
\endequ
where the upperbound is obtained using \eqref{sigrl}. {Invoking the properties of the function $V(x)$, we conclude Lyapunov stability of the closed-loop dynamics with respect to the energy level set
$
\{x\in \rea^n| H_\ell(x_\ell) = \beta_\ell,~ x_0=0\}
$
for $\beta_\ell >0$ (or the origin if $\beta_\ell =0$, respectively).

According to Barbalat's Lemma, together with \eqref{dotv}, we have
$$
\lim_{t\to\infty} Q(x(t)) =0
$$
for any intial conditions. Since $x(0) \notin \cali$, and using the convergence implication \eqref{detcon} directly, we have that \eqref{concon} holds when $\beta_\ell>0$.} If $\beta_\ell=0$ we conclude \eqref{concon0} recalling the minimum condition {\bf C3}.
\end{proof}

 Proposition \ref{pro1} for $\beta_\ell=0$ does not contradict Brockett's necessary condition. Indeed, in the proposed design we only guarantee that the origin of the closed-loop system is Lyapunov stable but \emph{not asymptotically} stable. More precisely, we establish the following implication
      $$
      \Big[ ~| x(0) | < \delta, \; x(0) \notin \cali ~\Big]  \; \Rightarrow \; \lim_{t\to\infty} x(t)  =0,
      $$
which differs from the usual attractivity condition $| x(0) | < \delta \; \Rightarrow \; \lim_{t\to\infty} x(t)  =0$. See Fig. \ref{fig:ills}. Interestingly, we have the following lemma, whose proof is given in Appendix \ref{appa}.
\begin{lemma}
\lab{lem1}
Consider the scenario of Proposition 1, with $\beta_\ell=0$. The following implication is true: $$[{\bf C1-C4}\; \Rightarrow \; \{0\} \subset\;\cl(\cali)],$$
where $\cl(\cdot)$ denotes the closure of the set.
\end{lemma}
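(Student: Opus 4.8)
The plan is to prove this by contradiction: I assume the origin is \emph{not} in $\cl(\cali)$ and show that then the control law \eqref{u} would be a \emph{smooth}, time-invariant, state-feedback rendering the origin of the driftless system \eqref{model_mech} asymptotically stable, contradicting Brockett's necessary condition \cite{BRO} (see also \cite{Bloch2003}) already recalled in Section \ref{sec1}.

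First I would collect the structural facts that hold under \textbf{C1--C4} with $\beta_\ell=0$. Denote by $f_{\cl}$ the closed-loop vector field on the right-hand side of \eqref{cloloo}; it is smooth, and by \textbf{C1}, i.e. the matching PDE \eqref{pde_general} written with the partition \eqref{a}, it satisfies $A^\top(x)f_{\cl}(x)=0$, so $f_{\cl}(x)\in\im S(x)$ for all $x$. Since $S(x)$ has full column rank (so that \eqref{u} is well defined and smooth), the orthogonal projector identity $S(x)[S^\top(x)S(x)]^{-1}S^\top(x)f_{\cl}(x)=f_{\cl}(x)$ shows that the closed loop obtained from \eqref{model_mech} with the feedback $u=u^\star(x)$ of \eqref{u} is exactly $\dot x=f_{\cl}(x)$. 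Moreover, by the minimum condition \textbf{C3}, \eqref{minconh}, one has $\nabla H(0)=0$, hence $f_{\cl}(0)=0$ and $u^\star(0)=0$: the origin is an equilibrium and the feedback is admissible in the sense of Brockett's theorem. Finally, from the proof of Proposition \ref{pro1} (with $\beta_\ell=0$) and the remark following it, the origin of $\dot x=f_{\cl}(x)$ is Lyapunov stable and there exists $\delta>0$ such that $|x(0)|<\delta$ together with $x(0)\notin\cali$ imply $\lim_{t\to\infty}x(t)=0$.

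Now the core step. Suppose, for contradiction, that $0\notin\cl(\cali)$. Then there is $r>0$ with $B_r(0)\cap\cali=\emptyset$. Set $\rho:=\min\{r,\delta\}$ with $\delta$ as above. Every $x(0)$ with $|x(0)|<\rho$ lies in $B_r(0)$, hence outside $\cali$, and also satisfies $|x(0)|<\delta$; therefore $\lim_{t\to\infty}x(t)=0$. Combined with Lyapunov stability, this makes the origin of $\dot x=f_{\cl}(x)=S(x)u^\star(x)$ locally asymptotically stable. Thus $u=u^\star(x)$, $u^\star(0)=0$, is a smooth, time-invariant, state-feedback asymptotically stabilizing the origin of the completely nonholonomic driftless system \eqref{model_mech}. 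This is impossible: because $S(0)$ has rank $m<n$, the image of any neighbourhood of $(0,0)$ under the map $(x,u)\mapsto S(x)u$ fails to contain a neighbourhood of $0\in\rea^n$, so Brockett's necessary condition is violated. The contradiction yields $\{0\}\subset\cl(\cali)$. \qed

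The argument involves essentially no computation; the two points requiring care are (a) certifying that the pH closed loop \eqref{cloloo} is genuinely produced by the smooth feedback \eqref{u} applied to \eqref{model_mech} — this is where \textbf{C1} and the full column rank of $S$ enter — and (b) turning the ``almost all nearby initial conditions converge'' property of Proposition \ref{pro1} into attractivity on a \emph{full} neighbourhood of the origin, which is exactly what the assumption $B_r(0)\cap\cali=\emptyset$ supplies. I expect (b) to be the conceptual crux, since it is precisely the gap between ``almost global'' and ``global'' attractivity that Brockett's obstruction forbids from being closed.
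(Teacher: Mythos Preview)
Your proof is correct and follows essentially the same route as the paper's: assume $0\notin\cl(\cali)$, obtain a full neighbourhood of the origin disjoint from $\cali$, combine Lyapunov stability with attractivity on that neighbourhood to conclude asymptotic stability, and contradict Brockett's condition. The only cosmetic difference is that the paper re-derives Lyapunov stability via invariance of the sublevel sets of $H$ (using $\dot H\le 0$, which indeed holds when $\beta_\ell=0$ thanks to \textbf{C2}), whereas you invoke the stability already established in Proposition~\ref{pro1}; your additional care in verifying that \eqref{u} truly yields the closed loop \eqref{cloloo} and that $S(0)$ has rank $m<n$ is a welcome bonus.
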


\begin{figure}[h]
    \centering
    \includegraphics[width=6cm]{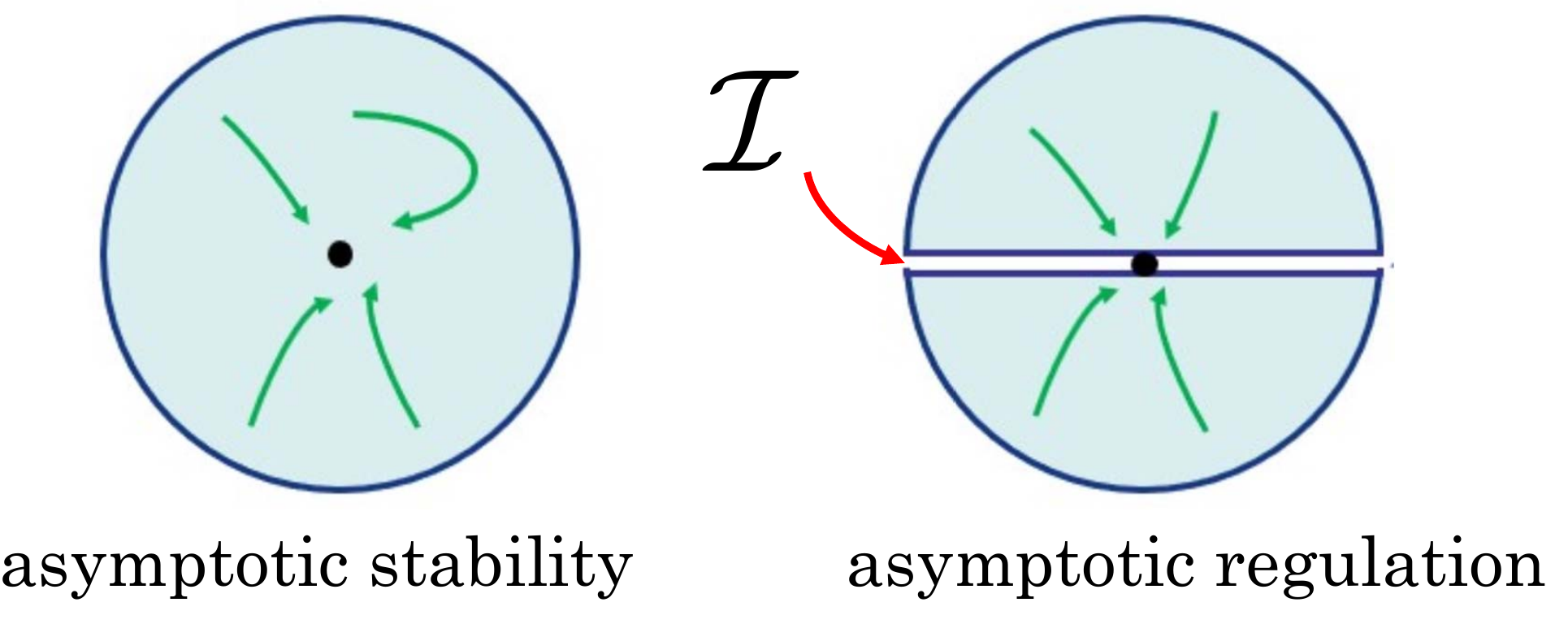}
    \caption{An illustration of the difference between asymptotic stability and the case studied in the paper}
    \label{fig:ills}
\end{figure}

\begrem
\lab{rem1}
To simplify the presentation we have assumed the direct partition of the state given in \eqref{parsta}. Proposition \ref{pro1} can be easily extended to the case where the partition is of the form
$$
\begmat{ x_\ell \\ x_0}=\calp x,
$$
where $\calp \in \rea^{n \times n}$ is a permutation matrix.
\endrem

\begrem
\lab{remr}
The EPD principle is codified in the inequality \eqref{sigrl} and graphically illustrated in Fig. \ref{fig_pd}. Clearly, when $\beta_\ell =0$, the EPD IDA-PBC becomes the standard IDA-PBC with damping injection ensured by \eqref{sigrl}.
\endrem

\begrem
\lab{rem4}
It should be pointed out that when $\beta_\ell>0$ the origin $x=0$ is an unstable equilibrium point of the closed-loop system. Indeed, the minimum condition \eqref{minconh} ensures that $H(x)$ is a (locally) positive definite function whose derivative is given as
$$
\dot H=-\|\nabla H_0(x_0)\|^2_{R_0(x)} - \|\nabla H_\ell(x_\ell)\|^2_{R_\ell(x) +  R^{\top}_\ell(x)}.
$$
On the other hand, in a small (relative to $\beta_\ell$) neighborhood of $x_\ell=0$, the EPD condition \eqref{sigrl} imposes that $R_\ell(x) +  R^{\top}_\ell(x)<0$. Hence, there exists a neighborhood of $x=0$ where $\dot H>0$ that---according to Lyapunov's first instability theorem \cite{KHAbook}---implies that the origin is unstable.
\endrem

\begin{figure}
    \centering
    \includegraphics[width=7cm]{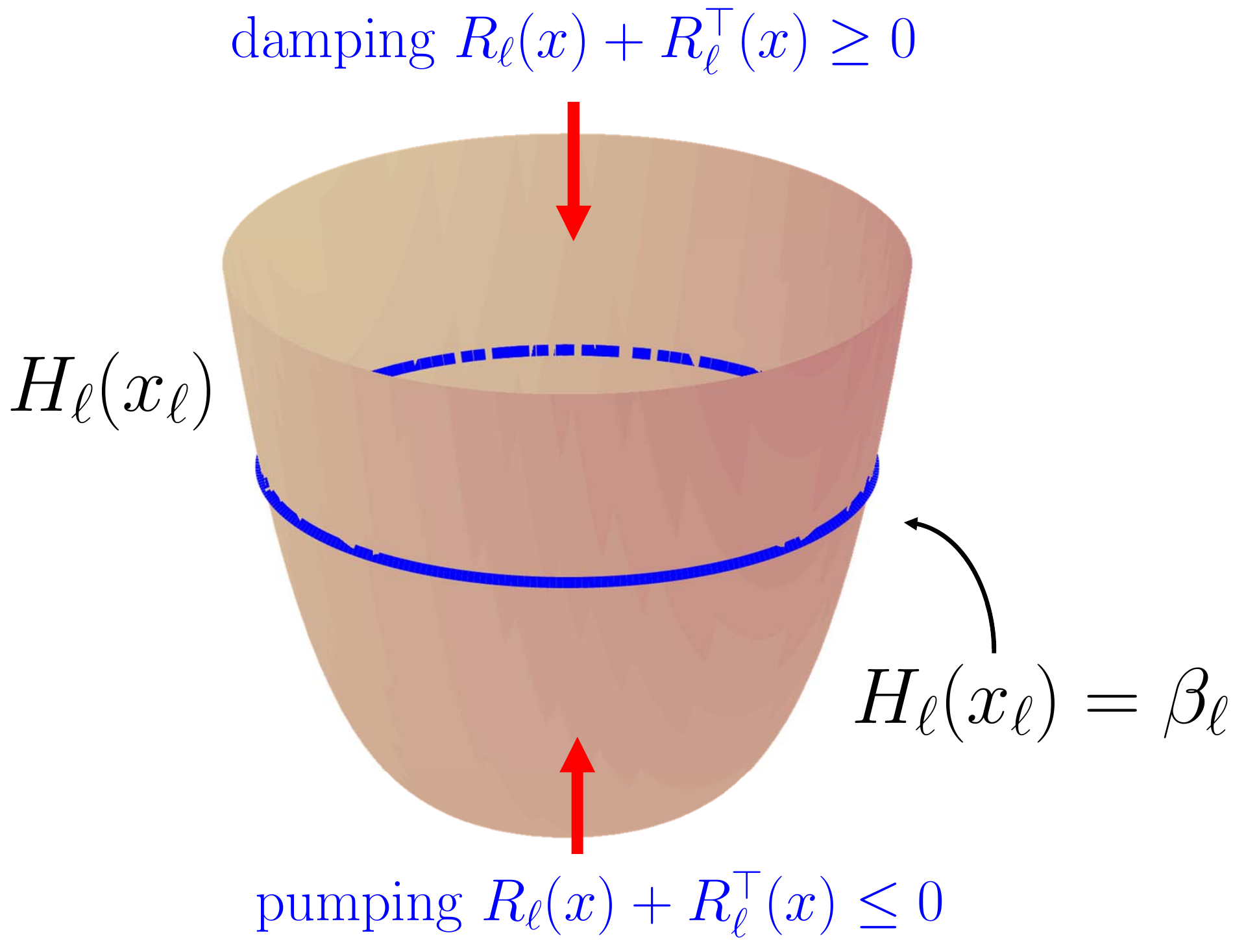}
    \caption{An interpretation to the EPD control method}
    \label{fig_pd}
  \end{figure}

\begrem
Standard IDA-PBC has been applied in \cite{DONetal,MASSCH} to stabilise a manifold containing the desired equilibrium point, in the latter publication including disturbance rejection. In \cite{FERetal,FUJetal} switched or non-smooth versions of IDA-PBC that ensure convergence to the desired equilibrium point are proposed.
\endrem

\begrem
As indicated in the Introduction, the (mathemathically elegant) transverse function method of \cite{MORSAMtac,MORSAM} follows the same approach adopted here---which was originally inspired by \cite{ESCetal}.\footnote{The first use of the FOC approach for the control of nonholonomic systems is, erroneously credited to \cite{DIXetal} in  \cite{MORSAMtac,MORSAM}. In view of the tangential reference to \cite{ESCetal} made in \cite{DIXetal}, this is probably inadvertendly.} This method can be used for the tracking problem of controllable driftless systems invariant on a Lie group.
\endrem
%
\section{Nonholonomic Integrator}
\lab{sec3}
%
In this section, we consider the benchmark example of the nonholonomic integrator described in chained form by
\begali{
\nonumber
\dot x_1 & = u_1\\
\nonumber
\dot x_2 & = u_2\\
\lab{nonholint}
\dot x_3 &= x_2 u_1.
}
The system can be represented in the form \eqref{model_mech}, \eqref{constraint} with the definitions
$$
S(x)=\begmat{1 & 0 \\ 0 & 1 \\ x_2 & 0},\;A(x)=\begmat{x_2 \\ 0 \\ -1}.
$$

The proposition below solves, via direct application of Proposition \ref{pro1}, the problems of regulation of the energy or driving the state to zero for the system \eqref{nonholint}.
%
\begin{proposition}
\label{pro2}\rm
Consider the nonholonomic system \eqref{nonholint} with the state partition $x_\ell=\col(x_1,x_2)$ and $x_0=x_3$. Fix  $\beta_\ell \geq 0$.
\begenu
\item[{\bf P1.}] The functions
$$
H_\ell(x_\ell)=\hal(x_1^2+x_2^2),\;\;H_0(x_0)=\hal x_3^2,
$$
together with the mappings
\begalis{
&J_0=0,\; R_0(x)=x_2^2,\; J_\ell(x)=\begmat{0 & -x_3 \\ x_3 & 0}, \;R_\ell(x)=\begmat{0 & 0 \\ 0 & \gamma H^{\tt s}_\ell(x_\ell)},
}
with $\gamma>0$ and $H^{\tt s}_\ell(x_\ell)$ defined in \eqref{tilhl}, verify conditions {\bf C1}-{\bf C3} of Proposition \ref{pro1}.
\item[{\bf P2.}] The smooth, time-invariant control law \eqref{u} takes the form\footnote{We have splitted the control law into $\vida(x)$ and $\vpd(x)$ to underscore the role of energy-shaping and EPD terms, respectively.}
\begequ
\lab{u1}
u =  \vida(x) + \vpd(x)
\endequ
with
\begalis{
\vida(x) & = \begmat{  -x_2 x_3 \\ x_1x_3  }\\
\vpd(x) & = \begmat{  0 \\ - \gamma H^{\tt s}_\ell(x_1,x_2) x_2 }.
}
\item[{\bf P3.}]   The function $Q(x)$, defined in \eqref{q}, is given as
\begequ
\lab{qexa}
Q(x)= - x_2^2 \big[\gamma (x_1^2+x_2^2-\beta_\ell)^2 + x_3^2\big].
\endequ
\item[{\bf P4.}] The control law \eqref{u1} ensures \eqref{concon} when $\beta_\ell>0$ or \eqref{concon0} when $\beta_\ell=0$ and $\gamma=1$ with the set of inadmissible initial conditions in {\bf C4} defined via the function
\begequ
\lab{hpro2}
h(x)=(x_1^2+x_2^2)x_3^2.
\endequ
\endenu
\end{proposition}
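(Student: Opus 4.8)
The plan is to verify Proposition \ref{pro2} by checking each of the four items as a direct specialization of Proposition \ref{pro1}. The bulk of the work is bookkeeping with the nonholonomic integrator's matrices; the one genuinely nontrivial part is the detectability analysis in {\bf P4}, i.e., identifying the inadmissible set and showing the convergence implication \eqref{detcon} holds outside it.

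\textbf{Item P1 (matching PDE and structural conditions).}
First I would assemble $\nabla H_\ell(x_\ell) = \col(x_1,x_2)$, $\nabla H_0(x_0) = x_3$, and read off $A_\ell(x) = \col(x_2,0)$, $A_0(x) = -1$ from \eqref{a}. Then I would plug into the matching PDE \eqref{pde_general}. The $x_\ell$-block gives
$A_\ell^\top (J_\ell - R_\ell)\nabla H_\ell = \begmat{x_2 & 0}\begmat{-x_3 x_2 \\ x_3 x_1 - \gamma H^{\tt s}_\ell x_2} = -x_2^2 x_3$,
while the $x_0$-block gives $A_0^\top (J_0 - R_0)\nabla H_0 = (-1)(-x_2^2)(x_3) = x_2^2 x_3$; these cancel, so {\bf C1} holds. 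Condition {\bf C2}, the EPD inequality \eqref{sigrl}, reduces to $2\gamma H^{\tt s}_\ell(x_\ell)\cdot H^{\tt s}_\ell(x_\ell)\ge 0$ in the $(2,2)$ entry, which is immediate since $\gamma>0$. For {\bf C3}: $H(x) = \hal(x_1^2+x_2^2+x_3^2)$ has gradient vanishing only at the origin, Hessian $I_3>0$, and the origin is isolated. Skew-symmetry of $J_0,J_\ell$ and $R_0\ge 0$ in \eqref{conintdam} are visible by inspection.

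\textbf{Items P2 and P3 (control law and $Q$).}
For {\bf P2} I would compute $[S^\top S]^{-1}S^\top$: since $S^\top S = \begmat{1+x_2^2 & 0 \\ 0 & 1}$ and $S^\top = \begmat{1 & 0 & x_2 \\ 0 & 1 & 0}$, the pseudo-inverse hits the stacked vector $\col\big((J_\ell - R_\ell)\nabla H_\ell,\ (J_0-R_0)\nabla H_0\big) = \col(-x_2 x_3,\ x_1 x_3 - \gamma H^{\tt s}_\ell x_2,\ x_2^2 x_3)$. The first component of $u$ becomes $\tfrac{1}{1+x_2^2}(-x_2 x_3 + x_2\cdot x_2^2 x_3) = -x_2 x_3$, and the second is $x_1 x_3 - \gamma H^{\tt s}_\ell x_2$, matching \eqref{u1} after the split into $\vida$ and $\vpd$. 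For {\bf P3} I would substitute into \eqref{q}: $\|\nabla H_0\|^2_{R_0} = x_2^2 x_3^2$ and $\hal H^{\tt s}_\ell \|\nabla H_\ell\|^2_{R_\ell + R_\ell^\top} = \hal H^{\tt s}_\ell \cdot 2\gamma H^{\tt s}_\ell x_2^2 = \gamma x_2^2 (H^{\tt s}_\ell)^2$; but wait---I must track the sign convention, since \eqref{qexa} is written with an overall minus. Reconciling with $\dot V = -Q$ in \eqref{dotv} fixes the sign, giving \eqref{qexa}.

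\textbf{Item P4 (detectability --- the main obstacle).}
This is the crux. From \eqref{qexa}, $\lim_{t\to\infty}Q(x(t))=0$ forces $\lim_{t\to\infty} x_2(t)^2\big[\gamma(H^{\tt s}_\ell)^2 + x_3^2\big] = 0$. I would argue that $x_2(t)\to 0$ cannot persist unless the trajectory is ``stuck'': examining the $x_2$-dynamics $\dot x_2 = u_2 = x_1 x_3 - \gamma H^{\tt s}_\ell x_2$, if $x_2 \equiv 0$ on the limit set then $\dot x_2 = x_1 x_3 = 0$ there, and the $x_1,x_3$ equations give $\dot x_1 = -x_2 x_3 = 0$, $\dot x_3 = x_2 u_1 = 0$, so the limit set is contained in $\{x_2 = 0,\ x_1 x_3 = 0\}$, i.e., exactly where $h(x) = (x_1^2+x_2^2)x_3^2 = 0$ (using $x_2=0$). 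This motivates $\cali = \{x\,|\,(x_1^2+x_2^2)x_3^2 = 0\} = \{x_3 = 0\}\cup\{x_1 = x_2 = 0\}$ as in \eqref{hpro2}. Then, for $x(0)\notin\cali$, one must show the alternative branch of the limit holds: either $x_3\to 0$ and $H^{\tt s}_\ell\to 0$, or $x_2\to 0$ with the trajectory \emph{not} approaching the stuck set---and rule out the latter by an invariance/LaSalle-type argument on the $\omega$-limit set combined with the Lyapunov function $V$, showing $V$ is constant on $\omega$-limit sets and forcing convergence to $\{H_\ell = \beta_\ell,\ x_3 = 0\}$ (when $\beta_\ell>0$) or to the origin (when $\beta_\ell = 0$, $\gamma = 1$). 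The restriction $\gamma = 1$ in the $\beta_\ell = 0$ case is presumably needed so that $H_\ell(x_\ell) = \hal|x_\ell|^2$ and the $(H^{\tt s}_\ell)^2$ term in $Q$ align the decay rates making the argument close cleanly; I would expect the detailed verification that no trajectory starting off $\cali$ gets trapped near $\cali$ to be the delicate step, likely requiring a careful look at how $x_1 x_3$ feeds back into $\dot x_2$ to push $x_2$ away from zero whenever $x_3 \not\to 0$. Finally, invoking Proposition \ref{pro1}'s conclusion with this $h$ and $\cali$ delivers \eqref{concon}/\eqref{concon0}.
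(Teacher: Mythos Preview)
Your treatment of {\bf P1}--{\bf P3} is essentially the paper's ``direct calculation'' route (modulo a harmless sign slip in the third component of the stacked vector in {\bf P2}: it should be $-x_2^2 x_3$, which is what actually makes the $u_1$ computation close).

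The gap is in {\bf P4}. You correctly locate the LaSalle invariant set inside $\{x_2=0,\;x_1x_3=0\}$ and correctly identify $h$ and $\cali$, but the mechanism you propose for ruling out convergence to the ``bad'' branch---``$x_1x_3$ feeds back into $\dot x_2$ to push $x_2$ away from zero''---is not the argument that works, and you leave the actual closure as a speculative ``delicate step.'' The paper's proof hinges on a computational observation you never make: along closed-loop trajectories the scalars $H^{\tt s}_\ell$ and $x_3$ satisfy the decoupled linear ODEs
\[
\dot H^{\tt s}_\ell = -\gamma\, x_2^2\, H^{\tt s}_\ell,\qquad \dot x_3 = -x_2^2\, x_3,
\]
so both are monotone and driven by the \emph{same} integrating factor. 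This immediately yields the equivalence $\lim_{t\to\infty} x_3(t)=0 \Leftrightarrow \lim_{t\to\infty} H^{\tt s}_\ell(x_\ell(t))=0 \Leftrightarrow x_2\notin\call_2$, together with the explicit solution
\[
H_\ell(x_\ell(t))=e^{-\gamma\int_0^t x_2^2(s)\,ds}\,H_\ell(x_\ell(0))+\beta_\ell\Bigl(1-e^{-\gamma\int_0^t x_2^2(s)\,ds}\Bigr),
\]
which shows $H_\ell(x_\ell(t))>0$ for all $t$ whenever $H_\ell(x_\ell(0))>0$. For $\beta_\ell>0$ this gives $\lim_{t\to\infty} H_\ell>0$, which \emph{directly} excludes the branch $\{x_1=x_2=0\}$ of the invariant set---no feedback-into-$\dot x_2$ argument is needed or used. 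For $\beta_\ell=0$ that branch is \emph{not} excluded; instead, the choice $\gamma=1$ makes $\dot V=-2x_2^2 V$, so $V\to 0 \Leftrightarrow x_2\notin\call_2 \Leftrightarrow H_\ell\to 0$; hence if $(x_1,x_2)\to 0$ one gets $H_\ell\to 0$ and therefore $V\to 0$, which forces $x_3\to 0$ as well. Your guess that $\gamma=1$ is about ``aligning decay rates'' is in the right spirit, but its precise role is to make $\dot V$ proportional to $V$ itself, closing the equivalence chain. Without these scalar ODEs the LaSalle analysis by itself does not separate the good limit set from the bad one, and inspecting how $x_1x_3$ enters $\dot x_2$ near $\{x_2=0\}$ does not substitute for it.
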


\begin{proof}\rm
The proof of claims {\bf P1}-{\bf P3} follows via direct calculations, noting that the closed-loop system takes the pH form
\begin{equation}
\label{cloloopro2}
  \dot x =
  \begin{bmatrix}
    0 & -x_3 & 0 \\ x_3 & - \gamma H^{\tt s}_\ell(x_\ell)  & 0 \\ 0 & 0 & - x_2^2
  \end{bmatrix}
  \begin{bmatrix}
  \nabla H_\ell \\ \nabla H_0
  \end{bmatrix}.
\end{equation}

To apply Proposition \ref{pro1} we need to prove the detectability-like condition \eqref{detcon} with the function $h(x)$ given in \eqref{hpro2}. First, we note that $\{x\in\rea^3| x_1=x_2=0 ~\mbox{or}~x_2=x_3=0\} \subset \cali$ is an equilibrium set for the closed-loop system \eqref{cloloopro2} that does not match the control objectives---therefore it has to be ruled out. We will now prove that $\cali$ is the set of inadmissible initial conditions.

Starting outside $\cali$, La Salle's Invariance Principle ensures that all trajectories will converge to the largest invariant set contained in the set $\{x \in \rea^n\;|\;Q(x)=0\}$. {Thus, we have the systems state converges to the set $\{x\in\rea^3|H_\ell(x_\ell) = \beta_\ell,~x_3=0\}$ or the largest invariant set in $\{x\in\rea^3|x_2=0\}$. Since the former is exactly the desired task,} it is clear that we only need to prove the implication \eqref{detcon} for the case {$\lim_{t\to\infty}x_2(t)=0$}. Towards this end, we first note that
\begin{equation}
\begin{aligned}
\label{dottilhell}
  \dot{H^{\tt s}_\ell} & = \dot H_\ell \\
  & = (\nabla H_\ell)^\top
  \begin{bmatrix}
    0 & -x_3 \\ x_3 & - \gamma H^{\tt s}_\ell
  \end{bmatrix}
  \nabla H_\ell \\
  & = - \gamma x_2^2 H^{\tt s}_\ell.
\end{aligned}
\end{equation}
Similarly, we have from \eqref{cloloopro2} that
$$
\dot x_3  = - x_2^2 x_3.
$$
From these two equations we conclude that
\begali{
\lab{limx3}
\lim_{t\to\infty} x_3(t) =0 & \; \Leftrightarrow \; \lim_{t\to\infty} H^{\tt s}_\ell(x_\ell(t)) =0, \;  \Leftrightarrow \; x_2(t) \notin \call_2.
}
Now, solving \eqref{dottilhell} we get
$$
{H}_\ell(x_\ell(t))  = e^{-\gamma \int_0^t x_2^2(s)ds} H_\ell(x_\ell(0))+ \beta_\ell[1- e^{-\gamma \int_0^t x_2^2(s)ds}].
$$
In view of the constraint on the initial conditions, \emph{i.e.}, $H_\ell(x_\ell(0)) \neq 0$, and the fact that $\beta_\ell \geq 0$, we have that
\begequ
\lab{hlpos0}
H_\ell(x_\ell(t))=\hal[x_1^2(t)+x_2^2(t)]>0,\;\forall t\in [0,\infty).
\endequ
Moreover, if $\beta_\ell > 0$ we also have that
\begequ
\label{hlpos-lim}
\lim_{t\to\infty} H_\ell(x_\ell(t)) >0.
\endequ
The equivalences \eqref{limx3} and the inequalities \eqref{hlpos0},  \eqref{hlpos-lim} will be instrumental to complete the proof.

{To apply LaSalle Invariance Principle, let us calculate the largest invariant set in $\{x\in \rea^3| x_2=0\}$.} The second equation in \eqref{cloloopro2} is given by
$$
\dot x_2=x_1 x_3-{\gamma \over 2}x_2(x_1^2+x_2^2-\beta_\ell)
$$
from which we conclude that
\begalis{
x_2 \equiv 0 & \;\Rightarrow\; x_1x_3 \equiv 0.
}
{It implies three cases: 1) $x_1 \equiv 0$, 2) $x_3 \equiv 0$, or 3) non-zero signals $x_1(t)$ and $x_3(t)$ are orthogonal, {\em i.e.}, $x_1(t)x_3(t)=0$. From the dynamics of $x_3$, we get the monotonicity of $x_3$ with respect to time, thus excluding the third case. It yields
$$
\lim_{t\to\infty} x_1(t) =0\quad \mbox{or} \quad
\lim_{t\to\infty} x_3(t) =0.
$$
}
We proceed now to prove that the latter implies \eqref{concon}. {If the systems state converges to the invariant set $\{x_2=x_3=0\}$}, we conclude that the trajectories of the closed-loop system verify $\lim_{t\to\infty} x_3(t) =0$, but from \eqref{limx3} we have that this is possible if and only if $\lim_{t\to\infty} H^{\tt s}_\ell(x_\ell(t)) =0$. Therefore, we only need to consider the case of {convergence to the invariance set $\{x_1=x_2=0\}$, {\em i.e.},} {$\lim_{t\to\infty}(x_1(t),x_2(t))=0$.}

If $\beta_\ell>0$ the inequalities \eqref{hlpos0} and \eqref{hlpos-lim} rule-out the possibility of {$\lim_{t\to\infty}x_1(t)=\lim_{t\to\infty}x_2(t)=0$,} completing the proof for this case.

Let us consider now the case $\beta_\ell=0$. In this case, we have that the function $V(x)$---defined in \eqref{v}---takes the form
$$
V(x) = {1\over2} H_\ell^2(x_\ell) + H_0(x_0),
$$
and its derivative is given by
$$
\dot{V}  =  - 2x_2^2 V.
$$
Consequently, recalling \eqref{limx3}, we have that
\begequ
\label{limv}
\lim_{t\to\infty} V(x(t)) =0 \; \Leftrightarrow\; \lim_{t\to\infty} H_\ell(x_\ell(t)) =0.
\endequ
The proof is concluded noting that if {$\lim_{t\to\infty}(x_1(t),x_2(t))=0$}, the trajectories of the closed-loop system verify $\lim_{t\to\infty} H_\ell(x_\ell(t)) =0$.
\end{proof}

\begrem
The controller of Proposition \ref{pro2} (with $\beta_\ell=0$ and $\gamma=1$) solves the problem of almost global regulation to zero of the nonholonomic integrator with a \emph{smooth, time-invariant} state-feedback. To the best of our knowledge, such a problem was still open in literature.
\endrem

\begrem
Although not necessary for the analysis of the asymptotic behavior in Proposition \ref{pro2}, we have added in the control a tunable parameter $\gamma>0$ that, as shown in the simulations, enhances the performance. For the case of regulation of the state to zero, this parameter is taken equal to one. However, it is possible to add this tuning gain in an alternative controller, which incorporates a dynamic extension that makes the constant $\beta_\ell$ a function of time  $\beta_\ell(t)$ that asymptotically converges to zero.
\endrem
\begrem
Due to its smoothness and time-invariance, it is reasonable to expect that the transient performance of the proposed design is better than the one resulting from the application of time-varying \cite{POM,JIAetal}, discontinuous  \cite{AST,FUJetal} or switching \cite{LIB} feedback laws. {Specifically, for the former there is poor transient performance with the presence of oscillations, which is clearly caused by injecting sinusoidal signals. For the discontinuous feedback, see \cite{AST} for instance, the sensitivity to measurement noise is due to the appearance of some state in the denominator.} This fact is illustrated via simulations in Section \ref{sec5}.
\endrem
\begin{remark}
{Notice that the system \eqref{nonholint} is diffeomorphic to the system considered in \cite{ESCetal}, that is,
\begalis{
\dot z_1 & = u_1\\
\dot z_2 & = u_2\\
\dot z_3 &= z_1 u_2 - z_2 u_1,
}
via the change of coordinates $z \mapsto (x_1,x_2,x_1x_2-2x_3)$, and they are both particular cases of the dynamical model of the current-fed induction motor \cite{ESCetal,MARbook}.}
\end{remark}

\section{Nonholonomic Systems in Chained Form}
\label{sec4}

Now we extend the results to the high dimensional nonholonomic systems with chained structure. That is, the $n$-dimensional system \eqref{model_mech} with
\begin{equation}
\label{S}
S(x) =
\begin{bmatrix}
  1 & 0 \\ 0 & 1\\ x_2 & 0 \\ x_3 & 0 \\ \vdots & \vdots \\ x_{n-1} & 0
\end{bmatrix}.
\end{equation}
It is well-known \cite{MURSAS,ORI} that arbitrary nonholonomic systems of order $n \leq 4$ can always be transformed into the previous chained form. Hence, the class considered in this section covers a large number of practical applications.

We have the following proposition whose proof is very similar to the proof of Proposition \ref{pro2}. Unfortunately, due to the complicated nature of the zero dynamics for the output $Q(x)$, we can only prove a \emph{local} convergence result for this general case.

\begin{proposition}
\label{prop4}\rm
Consider the nonholonomic system \eqref{model_mech}, \eqref{S} with the state partition $x_\ell=\col(x_1,x_2,x_4,\ldots,x_n)$ and $x_0=x_3$. Fix $\beta_\ell \geq 0$.
\begenu
\item[{\bf S1.}]  The functions
$$
H_0(x_0) = {1\over2} x_3^2,\;
H_\ell(x_\ell) = {1\over2} |x_\ell|^2,
$$
together with $J_0  =0$, $R_0(x) = x_2^2$ and the matrices
$$
\begin{aligned}
J_\ell(x) & =
\begin{bmatrix}
  0 & - x_3 & 0 & \ldots & 0 \\
  x_3 & 0  & x_3^2 & \ldots & x_3x_{n-1}  \\
  0  & 0  & 0 & \ldots & 0 \\
  0 & - x_3^2  & 0 & \ldots & 0 \\
  \vdots & \vdots  & \vdots & \vdots & \vdots \\
  0 & - x_3x_{n-1}  & 0 & \ldots & 0
\end{bmatrix}
\\
R_\ell(x) & =
\begin{bmatrix}
  0 & 0 & 0 & \ldots & 0 \\
  0 & \gamma H^{\tt s}_\ell(x_\ell)  & 0 & \ldots & 0  \\
  0  & 0  & 0 & \ldots & 0 \\
  0 & 0  & 0 & \ldots & 0 \\
  \vdots & \vdots  & \vdots & \vdots& \vdots \\
  0 & 0  & 0 & \ldots & 0
\end{bmatrix}
\end{aligned}
$$
where $H^{\tt s}_\ell(x_\ell)$ is defined in \eqref{tilhl} and $\gamma>0$, verify conditions {\bf C1}-{\bf C3} of Proposition \ref{pro1}.
\item[{\bf S2.}] The smooth, time-invariant control law \eqref{u} takes the form \eqref{u1} with
\begalis{
\vida(x) & =\begmat{  -x_2 x_3 \\ x_1x_3 + x_3\Big(x_3x_4 + \ldots + x_{n-1}x_n\Big) }\\
\vpd(x)& = \begmat{  0 \\ - \gamma H^{\tt s}_\ell(x_\ell) x_2 }.
}
\item[{\bf S3.}] The function $Q(x)$, defined in \eqref{q}, is given as
\begequ
\lab{qexa_prop3}
Q(x)= - x_2^2 \big(\gamma (H^{\tt s}_\ell(x_\ell))^2 + x_3^2\big).
\endequ
\item[{\bf S4.}] There exists $\delta_{\tt min}>0$ such that for all $\delta \leq \delta_{\tt min}$ the control law \eqref{u1} ensures \eqref{concon} when $\beta_\ell>0$ or \eqref{concon0} when $\beta_\ell=0$ and $\gamma=1$, {\em or} convergence to the following invariant set
$$
\{ x \in \rea^n ~| ~  x_2=0, x_1+ x_3x_4 + \ldots + x_{n-1}x_n =0,  x_3 \neq 0, H^{\tt s}_\ell(x_\ell)\neq 0 \},
$$
provided the initial state starts in the set
$
\{ x \in \rea^n ~| (H^{\tt s}_\ell(x_\ell))^2 + x_3^2 \leq \delta\}.
$
\endenu
\end{proposition}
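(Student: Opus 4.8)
The plan is to reproduce, essentially verbatim, the structure of the proof of Proposition \ref{pro2}, carrying along the extra chained states $x_4,\ldots,x_n$ and paying attention to the two places where the higher dimension bites. First I would write out the closed loop obtained from the data of {\bf S1}--{\bf S2}: it has the block pH form \eqref{cloloo}, and along trajectories one reads off $\dot{H^{\tt s}_\ell}=-\gamma x_2^2 H^{\tt s}_\ell$, $\dot x_3=-x_2^2 x_3$, and $\dot x_j$ proportional to $x_2$ for $j\ge 4$. These integrate to $H^{\tt s}_\ell(t)=H^{\tt s}_\ell(0)e^{-\gamma\int_0^t x_2^2}$ and $x_3(t)=x_3(0)e^{-\int_0^t x_2^2}$, exactly as in Proposition \ref{pro2}. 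Taking $V$ as in \eqref{v}, claim {\bf S3} gives $\dot V=-x_2^2(\gamma(H^{\tt s}_\ell)^2+x_3^2)\le 0$, so $2V(t)=(H^{\tt s}_\ell(t))^2+x_3^2(t)$ stays below $\delta$; choosing $\delta_{\tt min}<\beta_\ell^2$ when $\beta_\ell>0$ keeps $H_\ell(x_\ell(t))=\beta_\ell+H^{\tt s}_\ell(t)$ bounded and bounded away from zero, hence $|x_\ell|^2=2H_\ell$ bounded, so the whole trajectory is bounded and LaSalle's invariance principle applies (for $\beta_\ell=0$ the initial sublevel set $\{\frac14|x_\ell|^4+x_3^2\le\delta\}$ is already compact).

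Next I would run LaSalle: trajectories converge to the largest invariant set inside $\{Q(x)=0\}=\{x_2=0\}\cup(\{H^{\tt s}_\ell=0\}\cap\{x_3=0\})$, whose second piece is precisely the target of \eqref{concon} (the origin when $\beta_\ell=0$). On $\{x_2\equiv0\}$ every $\dot x_i$ vanishes and the surviving constraint $\dot x_2\equiv0$ reads $x_3(x_1+x_3x_4+\ldots+x_{n-1}x_n)\equiv0$, so the invariant set in $\{x_2=0\}$ splits into $\{x_2=x_3=0\}$ and $\{x_2=0,\ x_1+x_3x_4+\ldots+x_{n-1}x_n=0\}$, a set of equilibria. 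Then comes the case split on the equilibrium $x^\star$ reached by the bounded trajectory: if $x_3^\star=0$ I would use $x_3(t)=x_3(0)e^{-\int_0^t x_2^2}$ with $x_3(0)\neq0$ to force $\int_0^\infty x_2^2=\infty$, hence $H^{\tt s}_\ell(t)\to0$, giving \eqref{concon}; for $\beta_\ell=0,\gamma=1$ I would instead use $V(t)=V(0)e^{-2\int_0^t x_2^2}$ and the equivalence $\lim V=0\Leftrightarrow\lim H_\ell=0$ to conclude $\lim x=0$, i.e. \eqref{concon0}. If $x_3^\star\neq0$, then $\lim x_3\neq0$ forces $\lim H^{\tt s}_\ell\neq0$ by the same exponential identities, so $x^\star$ lies in the invariant set $\{x_2=0,\ x_1+x_3x_4+\ldots+x_{n-1}x_n=0,\ x_3\neq0,\ H^{\tt s}_\ell\neq0\}$ displayed in {\bf S4}. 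When $\beta_\ell>0$ the only remaining branch, $\lim x_\ell=0$, is ruled out by the bound $H_\ell(x_\ell(t))\ge\beta_\ell-\sqrt\delta>0$ from the first step; this is exactly where the locality enters.

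The main obstacle is the computation of the invariant set inside $\{x_2=0\}$ --- what the paper calls the complicated \emph{zero dynamics} of the output $Q$. In the $n=3$ case (Proposition \ref{pro2}) the constraint $x_2\equiv0$ reduced to the factored relation $x_1x_3\equiv0$, so the invariant set was a union of two coordinate subspaces and the ``bad'' branch $\{x_1=x_2=0\}$ could be discarded directly. For general $n$ the constraint becomes $x_3(x_1+x_3x_4+\ldots+x_{n-1}x_n)\equiv0$, whose second factor is a genuinely bilinear relation in the state that one cannot eliminate; this is why only a local statement is available and why the invariant ``bad set'' has to appear in the conclusion rather than be excluded. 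A secondary point to handle with care is the degenerate slice $x_3(0)=0$ (where $x_3\equiv0$ and the $x_2$-subsystem may settle at $x_2=0$ with $H^{\tt s}_\ell\neq0$): this is a nowhere-dense, zero-measure set of initial conditions --- the analogue of the inadmissible set $\cali$ of Proposition \ref{pro2} --- and, consistently with the ``almost'' philosophy of the paper, I would simply assume $x_3(0)\neq0$ throughout the argument.
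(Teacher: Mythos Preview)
Your proposal is correct and follows the same route as the paper's own proof: write the closed-loop pH form, extract the key identities $\dot{H^{\tt s}_\ell}=-\gamma x_2^2 H^{\tt s}_\ell$ and $\dot x_3=-x_2^2 x_3$, apply LaSalle on the (compact, forward-invariant) sublevel set $\{2V\le\delta\}$, and split the largest invariant set in $\{x_2\equiv0\}$ via the factored relation $x_3(x_1+x_3x_4+\ldots+x_{n-1}x_n)=0$. You in fact supply more detail than the paper (explicit boundedness via $\delta$, exclusion of $x_3(0)=0$, the reason the bad set carries $x_3\neq0$ and $H^{\tt s}_\ell\neq0$); the only superfluous remark is the ``remaining branch $\lim x_\ell=0$'' at the end --- that branch does not actually arise in your own case split on $x_3^\star$, and the locality is really only used to secure compactness for LaSalle.
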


\begin{proof}\rm
The proof of claims {\bf S1}-{\bf S3} follows via direct calculations, noting that the closed-loop system takes the pH form
\begin{equation}
\label{closed-loop2}
  \dot x =
\begin{bmatrix}
  0 & - x_3 & 0 & 0 & \ldots & 0 \\
  x_3 & -\gamma H^{\tt s}_\ell(x_\ell) & 0 & x_3^2 & \ldots & x_3x_{n-1}  \\
  0  & 0 & - x_2^2 & 0 & \ldots & 0 \\
  0 & - x_3^2 & 0 & \ldots & \ldots & 0 \\
  \vdots & \vdots & \vdots & \ldots &  \\
  0 & - x_3x_{n-1} & 0 & 0 & \ldots & 0
\end{bmatrix}
  \nabla H(x).
\end{equation}
Similarly to the case of the nonholonomic integrator, we also have the key relationships
$$
  \dot H^{\tt s}_\ell    = - \gamma x_2^2 H^{\tt s}_\ell,\;  \dot x_3    = - x_2^2 x_3.
$$
Hence, the equivalence \eqref{limx3} holds true. Also, in view of \eqref{qexa_prop3} we only need to study the case $x_2 \equiv 0$, when we have from the closed-loop dynamics \eqref{closed-loop2} that
$$
x_3(x_1 + x_3x_4 + \ldots + x_{n-1}x_n) =0.
$$
Hence, $x_3=0$ or $x_1 + x_3x_4 + \ldots + x_{n-1}x_n = 0$. In the first case, we clearly have $\lim_{t\to\infty}x_3 =0$, and using \eqref{limx3}, we conclude that $\lim_{t\to\infty}H^{\tt s}_\ell(x_\ell(t)))  =0$, achieving the control objective. Therefore, we conclude the state will converge into the following set
\begequ
\lab{resset}
\{ x \in \rea^n ~| ~H^{\tt s}_\ell(x_\ell)=0,\; x_3 =0 \}
\cup
\{ x \in \rea^n ~| ~  x_2=0, x_1+ x_3x_4 + \ldots + x_{n-1}x_n =0 \}.
\endequ
completing the proof.
\end{proof}

\begrem
\lab{rem9}
As shown in the proposition above, the matching PDEs are always solvable satisfying all the assumptions of the EPD IDA-PBC design. Unfortunately, for $n \geq 4$ the invariant set to which all trajectories converge given in \eqref{resset} contains, besides the target set, an additional set that complicates the convergence analysis. Thus, we can only guarantee local convergence. Moreover, simulation evidence proves that---starting far away from the desired equilibrium---the closed-loop system trajectories will not converge to their desired values, confirming the local nature of our result.
\endrem

\section{Simulations}
\label{sec5}

The performance of the proposed controller is illustrated via simulations with Matlab/Simulink, which are summarized as follows.

\begenu
\setlength{\itemsep}{7pt}
\item[{\bf E1}] In Fig. \ref{fig:simulation1} we give the simulation results of the \emph{energy} regulation controller of Proposition \ref{pro2}, \emph{i.e.}, with $\beta_\ell>0$, the initial conditions $x(0)=(3,2,2)^\top$ and $\gamma=5$.
\item[{\bf E2}] In Fig. \ref{fig:simulation2} we repeat the simulation above, but for \emph{state} regulation, that is, $\beta_\ell=0$. We also give the simulation results of the well-known Pomet's method \cite{POM} with
    $$
    \begin{aligned}
     u_1 & = - (x_2 + x_3\cos (t))x_2\cos (t) - (x_2x_3+x_1) \\
     u_2 & = x_3\sin (t) - (x_2 + x_1\cos (t)).
    \end{aligned}
    $$
As expected, due to the periodic signal injection in the feedback law---which was designed following the procedure proposed in \cite{POM}---large oscillations are observed in the lengthy transient stage. Clearly, the new design outperforms Pomet's method with a significantly improved transient performance.
\item[{\bf E3}] To evaluate the robustness of the EPD IDA-PBC method, we repeat the experiment above adding (unavoidable) high-frequency noise in the measurable state.\footnote{The measurement noise is generated by the block ``Uniform Random Number'', where the signals are limited to $[-0.1,0.1]$ and the sample times are selected as 0.01.} Fig. \ref{fig:simulation3} illustrates that the state now converge to a small neighborhood of the desired equilibrium point. Here, we compare the new design with the famous (exponentially convergent) discontinuous design of Astolfi \cite{AST}
    $$
    \begin{aligned}
     u_1 & = - kx_1 \\
     u_2 & = p_2 x_2 + p_3 {x_3\over x_1},
    \end{aligned}
    $$
with $k=1, p_2=-5$ and $p_2=9$. As shown in the figure the state trajectories grow unbounded in finite time. Thus \emph{ad-hoc} modifications are needed to deal with this problem in practice.
\item[{\bf E4}] Simulations of the energy regulation controller for the case $n=4$ were also carried-out. In Figs. \ref{fig:4-1}-\ref{fig:4-2}, we fix $\gamma=0.5$. The trajectory in Fig. \ref{fig:4-1} achieves the desired objective. However, Fig. \ref{fig:4-2} shows that it fails for a larger $x_4(0)$. If we fix $x_1(0)=0.5, x_2(0)=1$ and $x_3(0)=0.1$, after extensive simulations we can find the critical initial value for $x_4(0)$ to be between $0.9$ and $1$---for $x_4(0)>1$ the state will converge to the undesired set and the desired objective is not achieved. Increasing the parameter $\gamma$, as shown in Figs. \ref{fig:4-3}-\ref{fig:4-4}, the controller will achieve the desired target again. Roughly speaking, in this simulation case a larger $\gamma$ enlarges the ``domain of attraction''---however, this pattern was not observed in other simulation scenarios.
\endenu

\begin{figure}[h]
\centering
\subfloat[Trajectories in the state space]{
    \includegraphics[width=5cm,height=4cm]{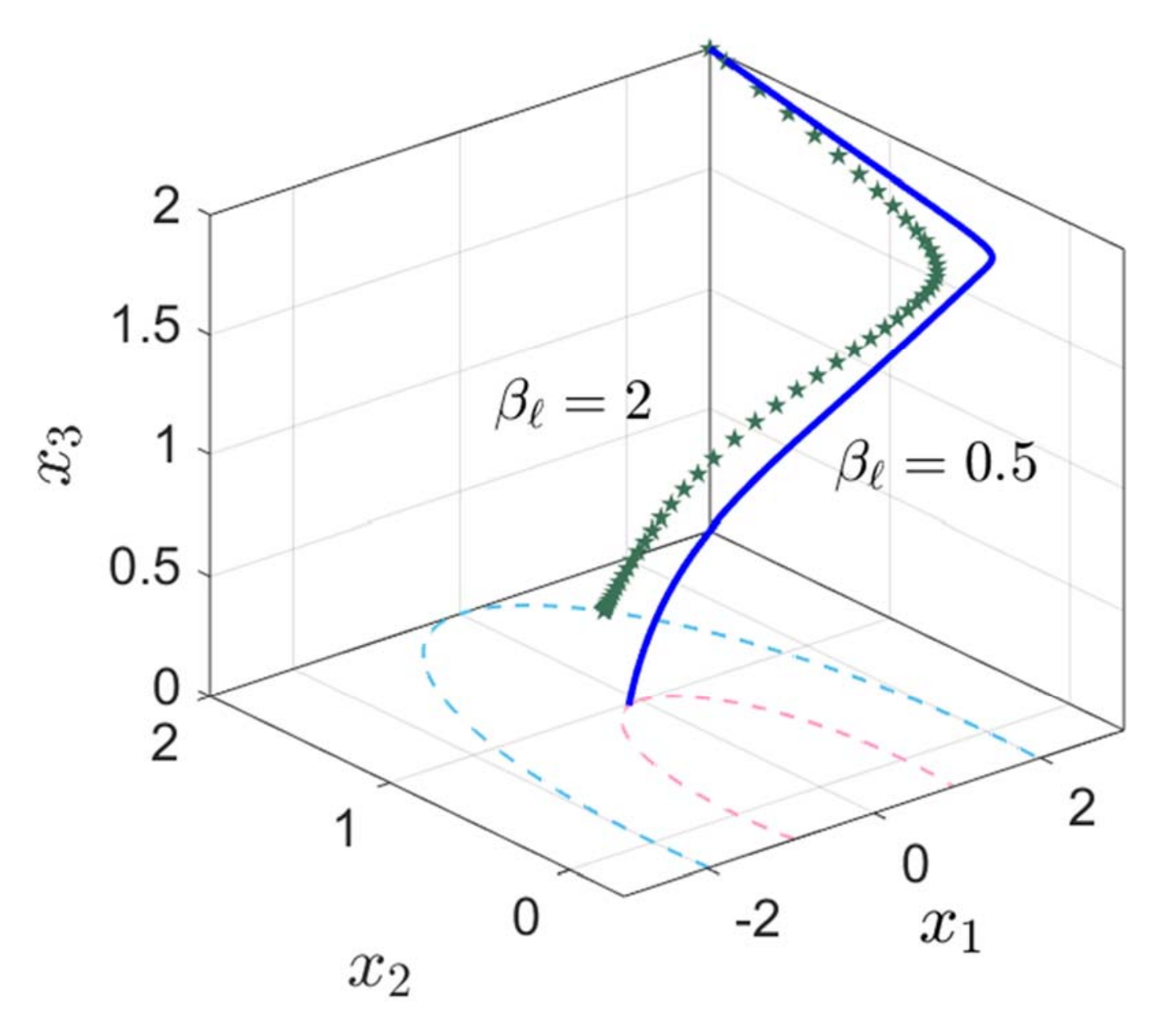}
    \label{fig:1-1}
}
\subfloat[$x_3(t)$ and partial energy $H_\ell(x_\ell(t))$]{
    \includegraphics[width=5cm,height=4cm]{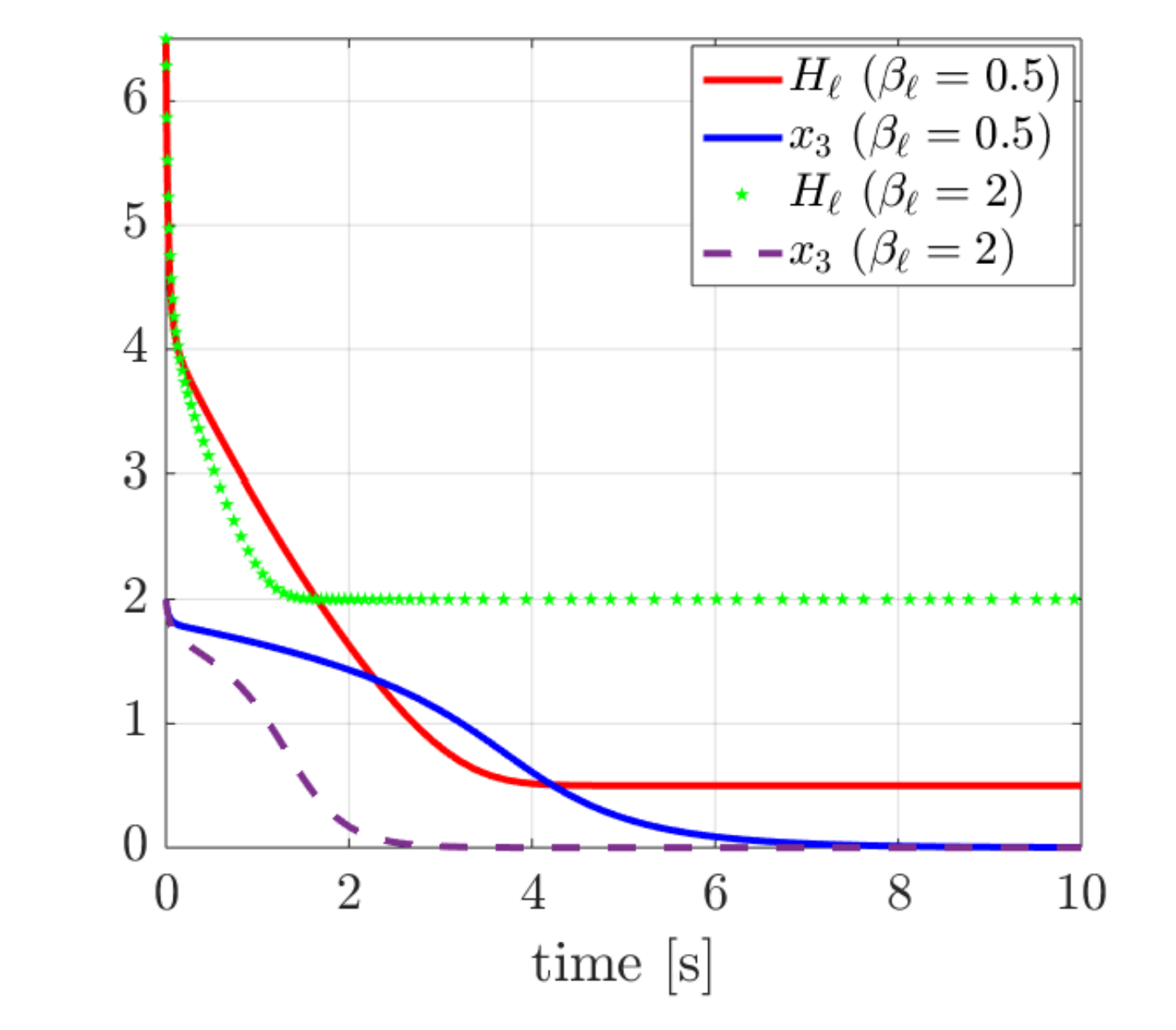}
    \label{fig:1-2}
}
\\
\subfloat[Trajectories of $x_\ell$ and function $H_\ell(x_\ell)$]{
    \includegraphics[width=5cm,height=4cm]{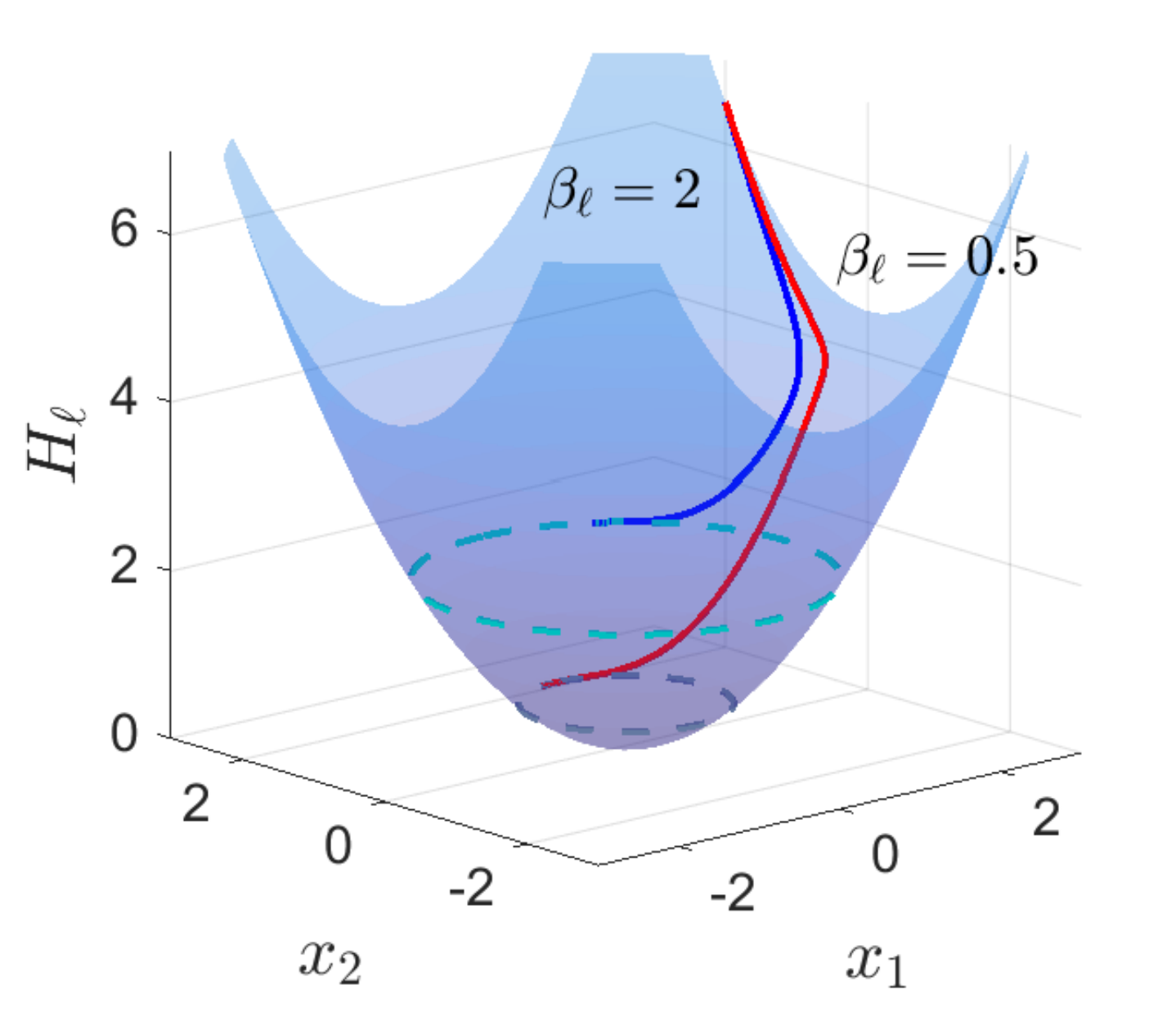}
    \label{fig:1-3}
}
\subfloat[Control inputs]{
    \includegraphics[width=5cm,height=4cm]{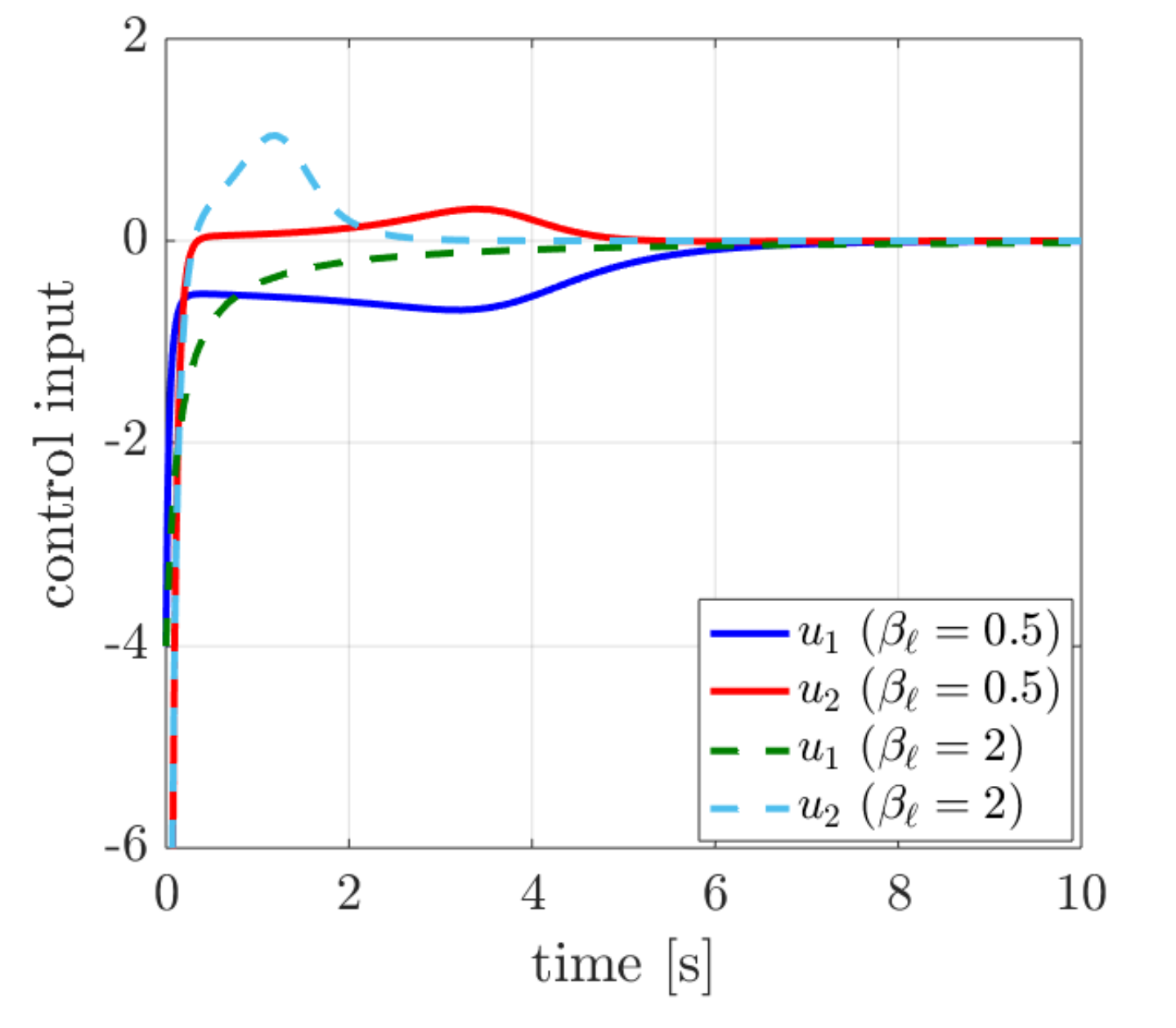}
    \label{fig:1-4}
}
\caption{Energy regulation (with $\beta_\ell=0.5,2$) of the nonholonomic integrator}
\label{fig:simulation1}
\end{figure}


\begin{figure}[h]
\centering
\subfloat[Trajectories in the state space]{
    \includegraphics[width=5cm,height=4cm]{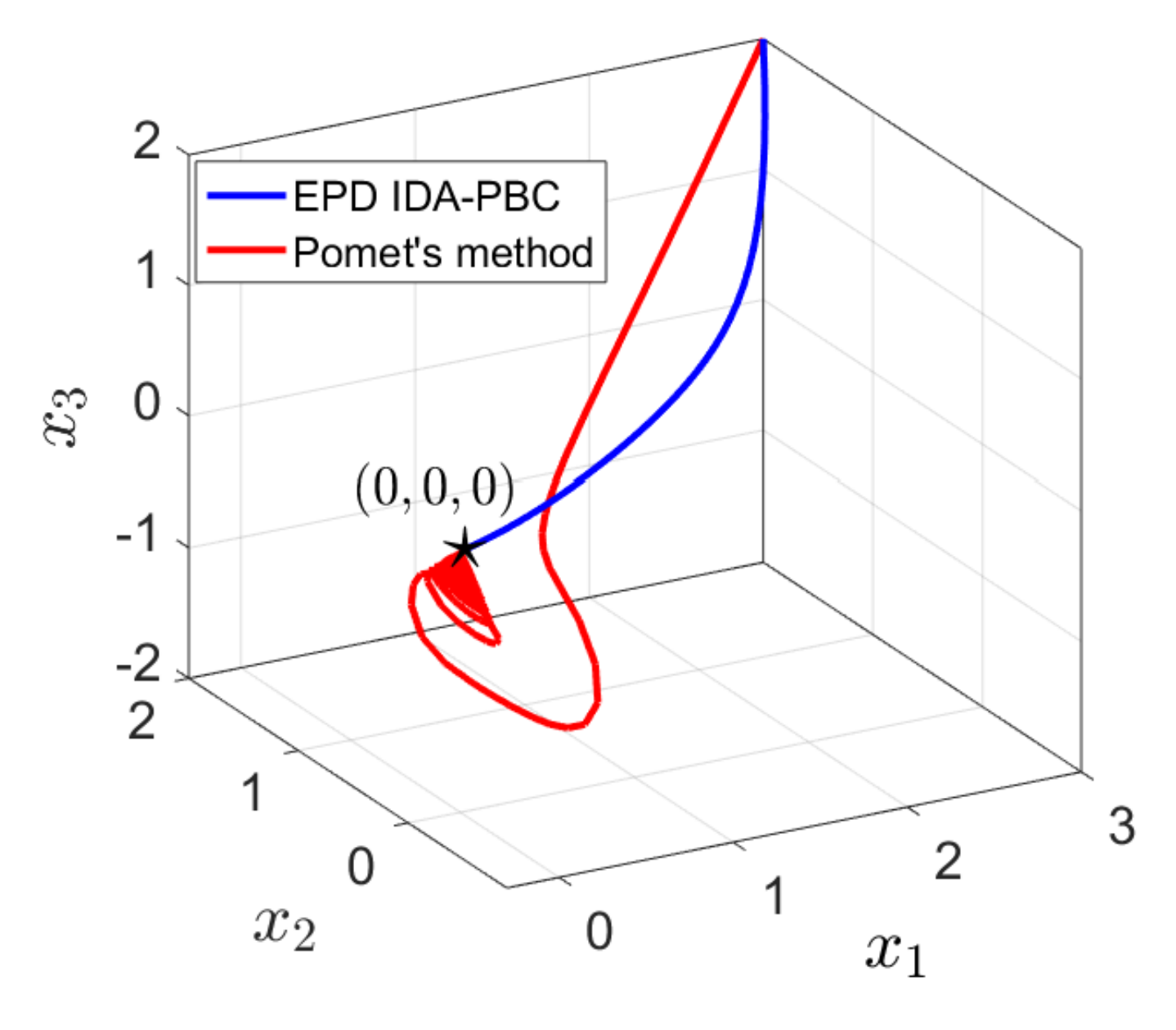}
    \label{fig:2-1}
}
\subfloat[State trajectories  $x(t)$ of Pomet's controller]{
    \includegraphics[width=5cm,height=4cm]{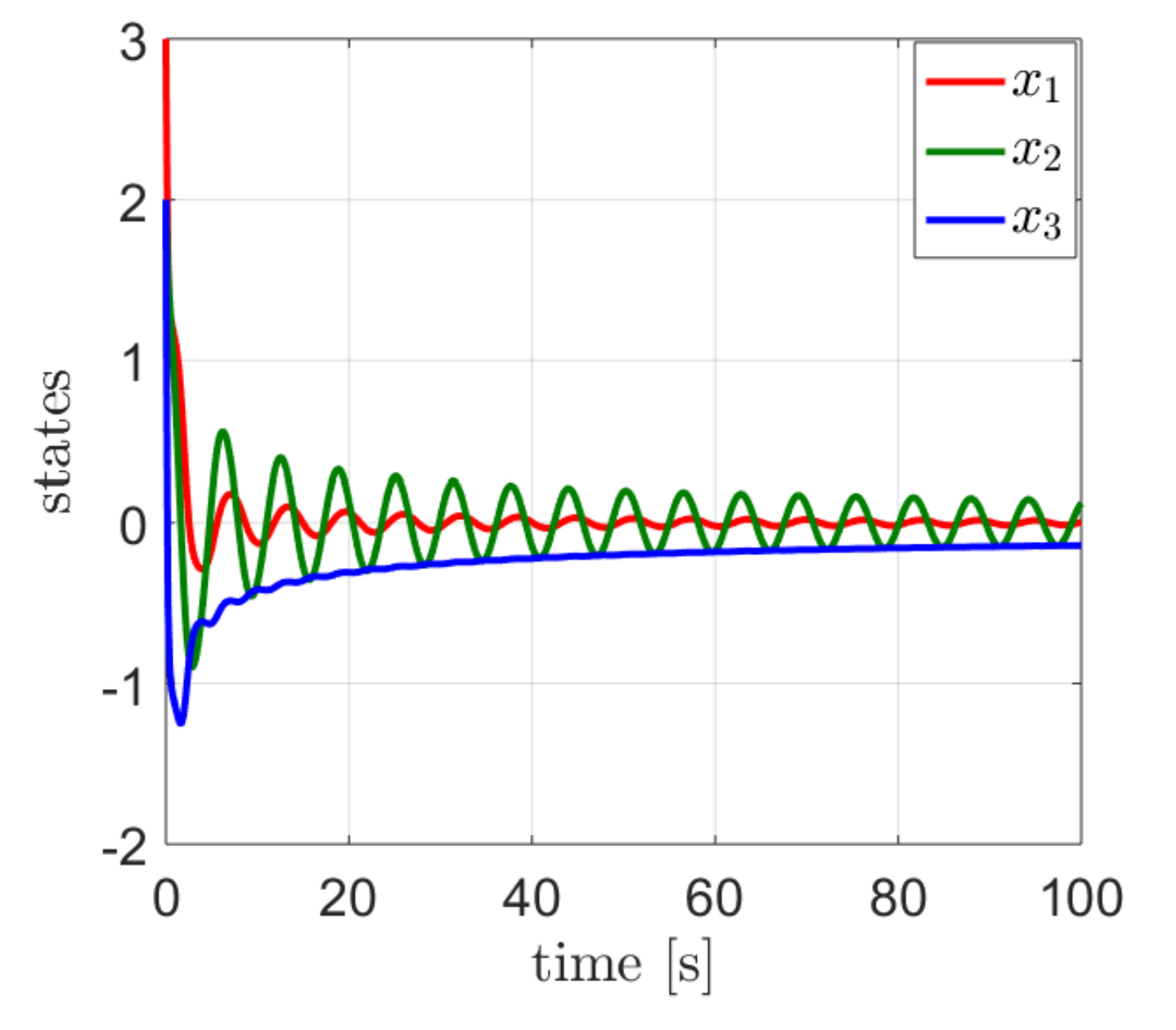}
    \label{fig:2-2}
}
\\
\subfloat[State trajectories  $x(t)$ of EPD IDA-PBC]{
    \includegraphics[width=5cm,height=4cm]{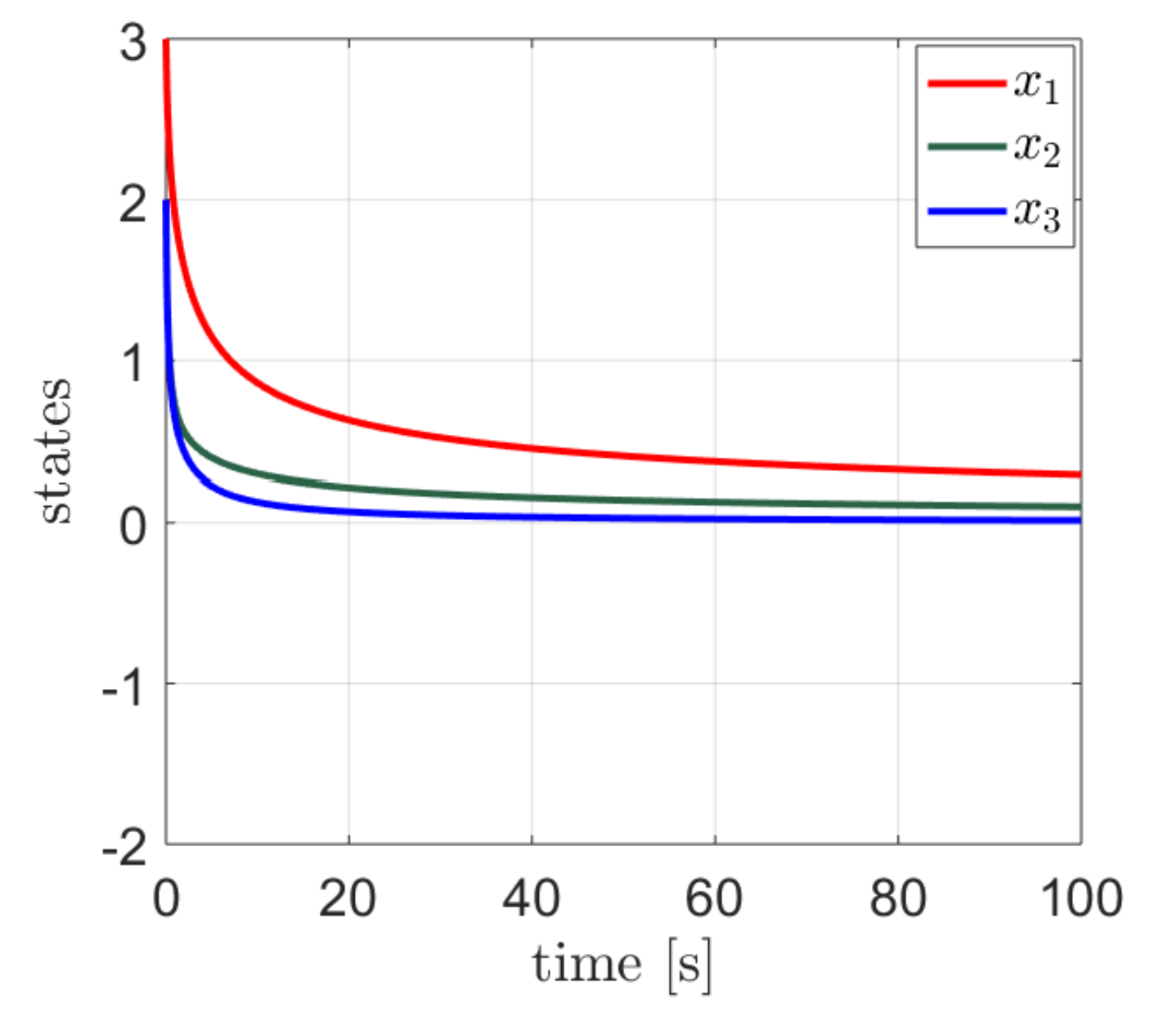}
    \label{fig:2-3}
}
\subfloat[Control inputs]{
    \includegraphics[width=5cm,height=4cm]{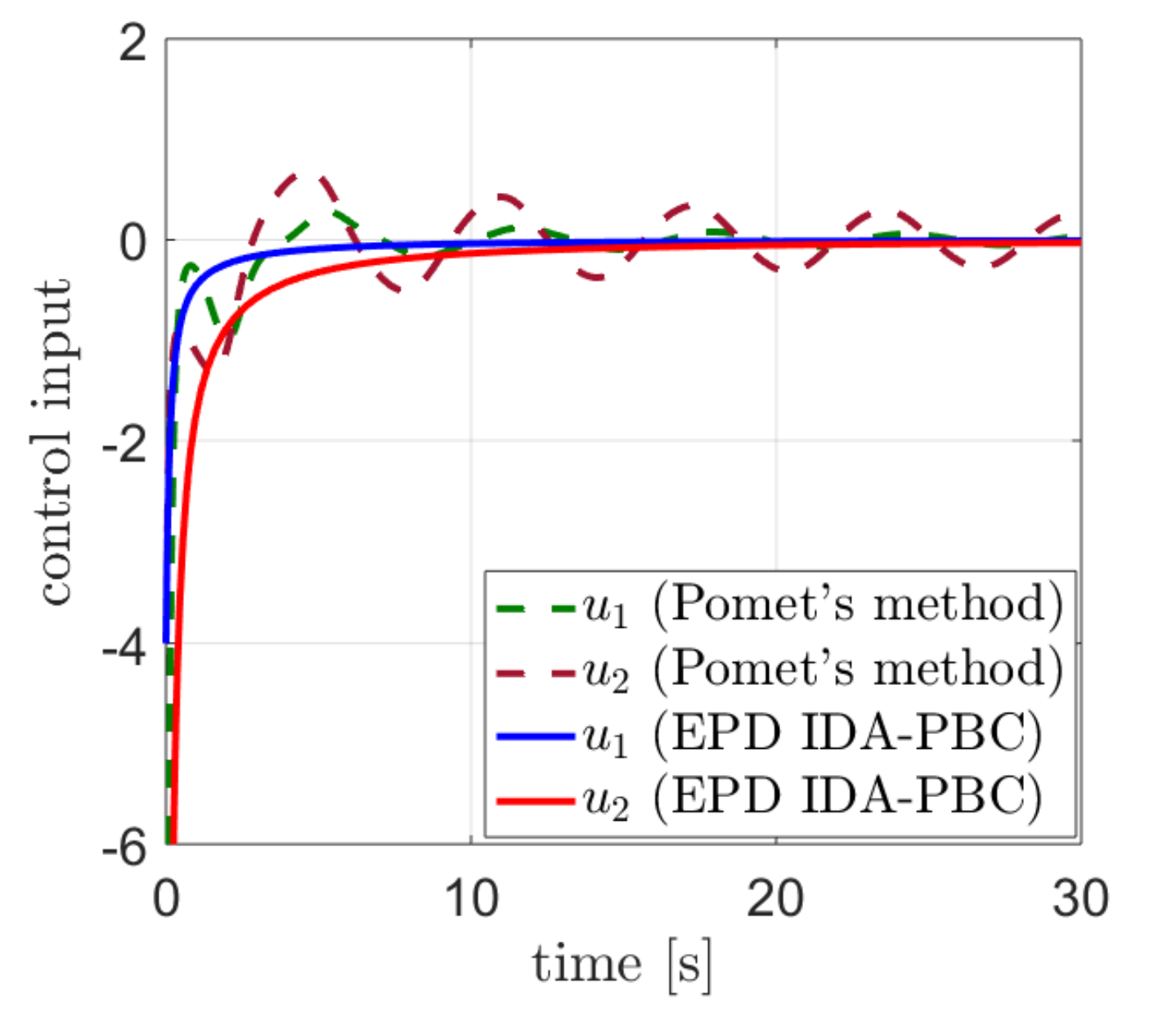}
    \label{fig:2-4}
}
\caption{State regulation of nonholonomic integrator with EPD IDA-PBC and Pomet's controller}
\label{fig:simulation2}
\end{figure}


\begin{figure}[h]
\centering
\subfloat[Trajectories in the state space]{
    \includegraphics[width=5cm,height=4cm]{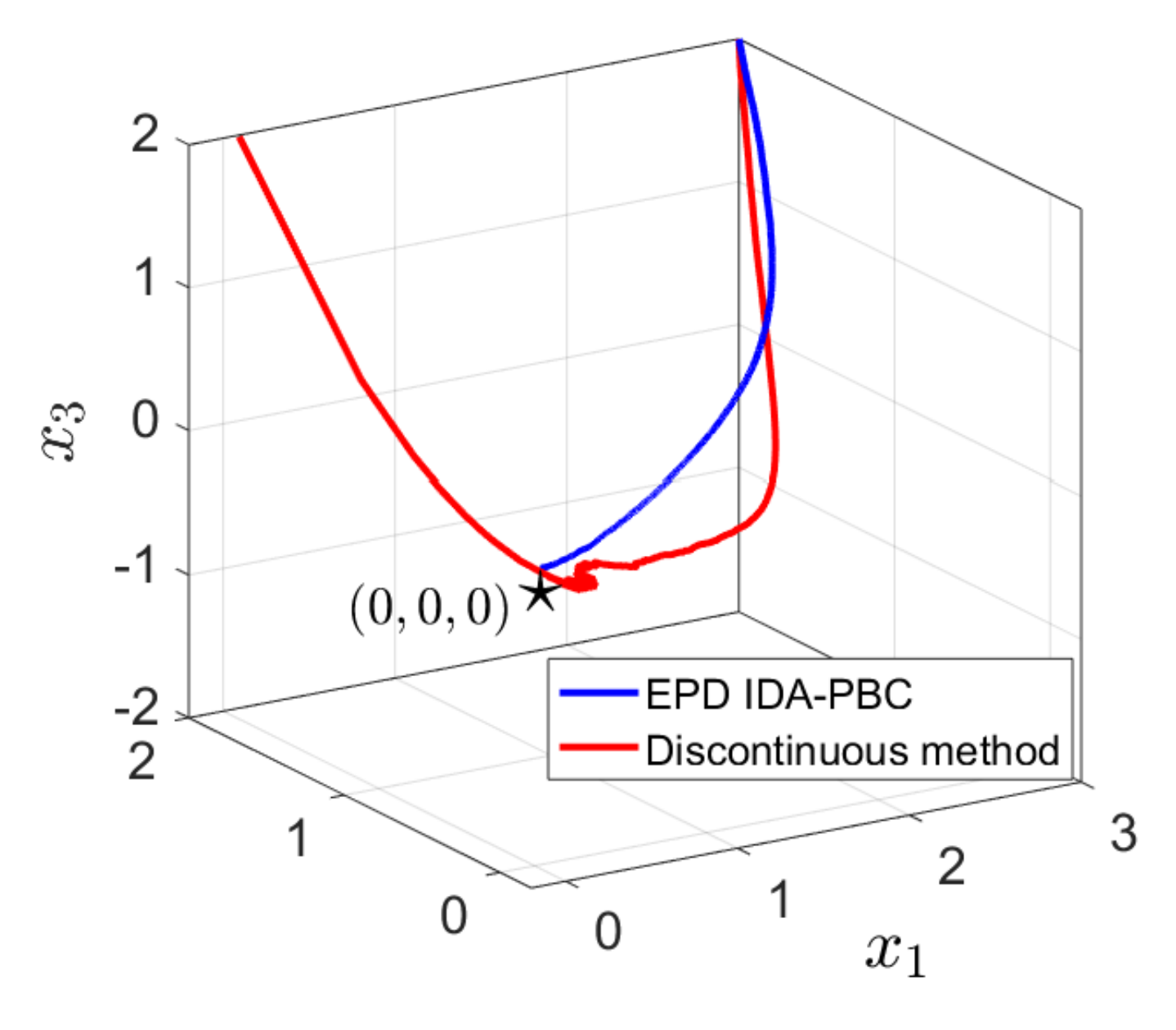}
    \label{fig:3-1}
}
\subfloat[State trajectories $x(t)$ of EPD IDA-PBC]{
    \includegraphics[width=5cm,height=4cm]{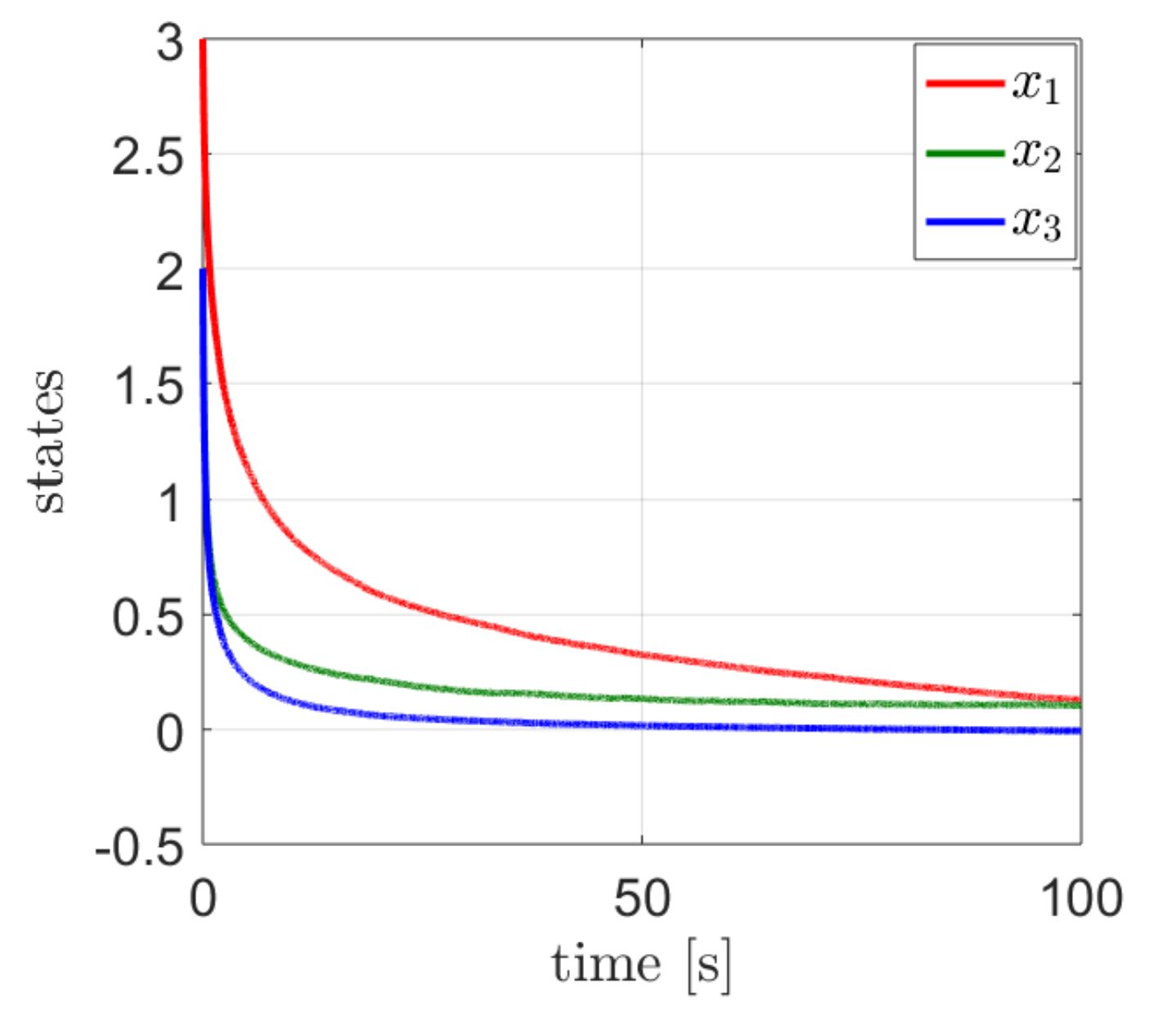}
    \label{fig:3-2}
}
\\
\subfloat[State trajectories $x(t)$ of Astolfi's controller]{
    \includegraphics[width=5cm,height=4cm]{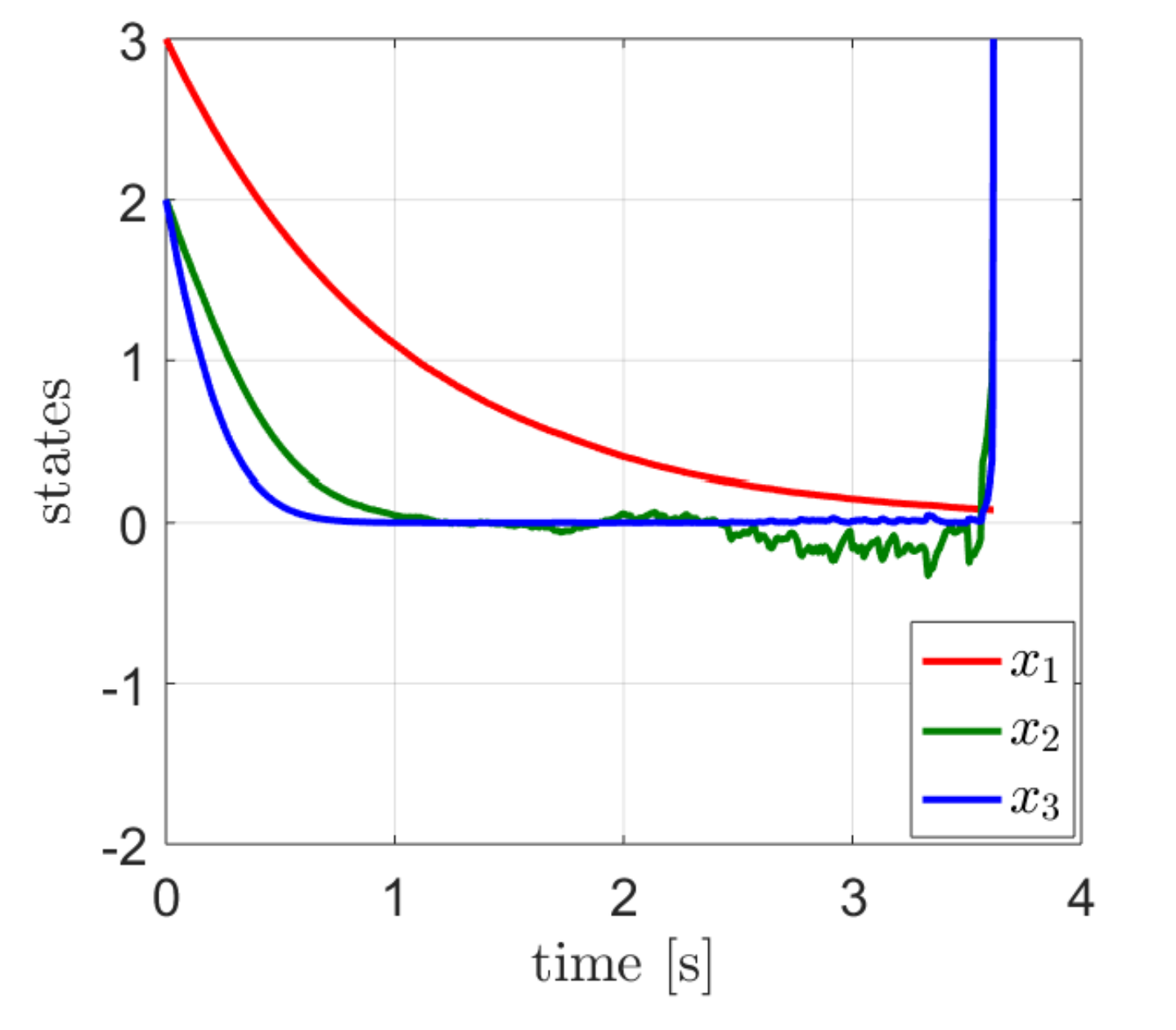}
    \label{fig:3-3}
}
\subfloat[Control inputs]{
    \includegraphics[width=5cm,height=4cm]{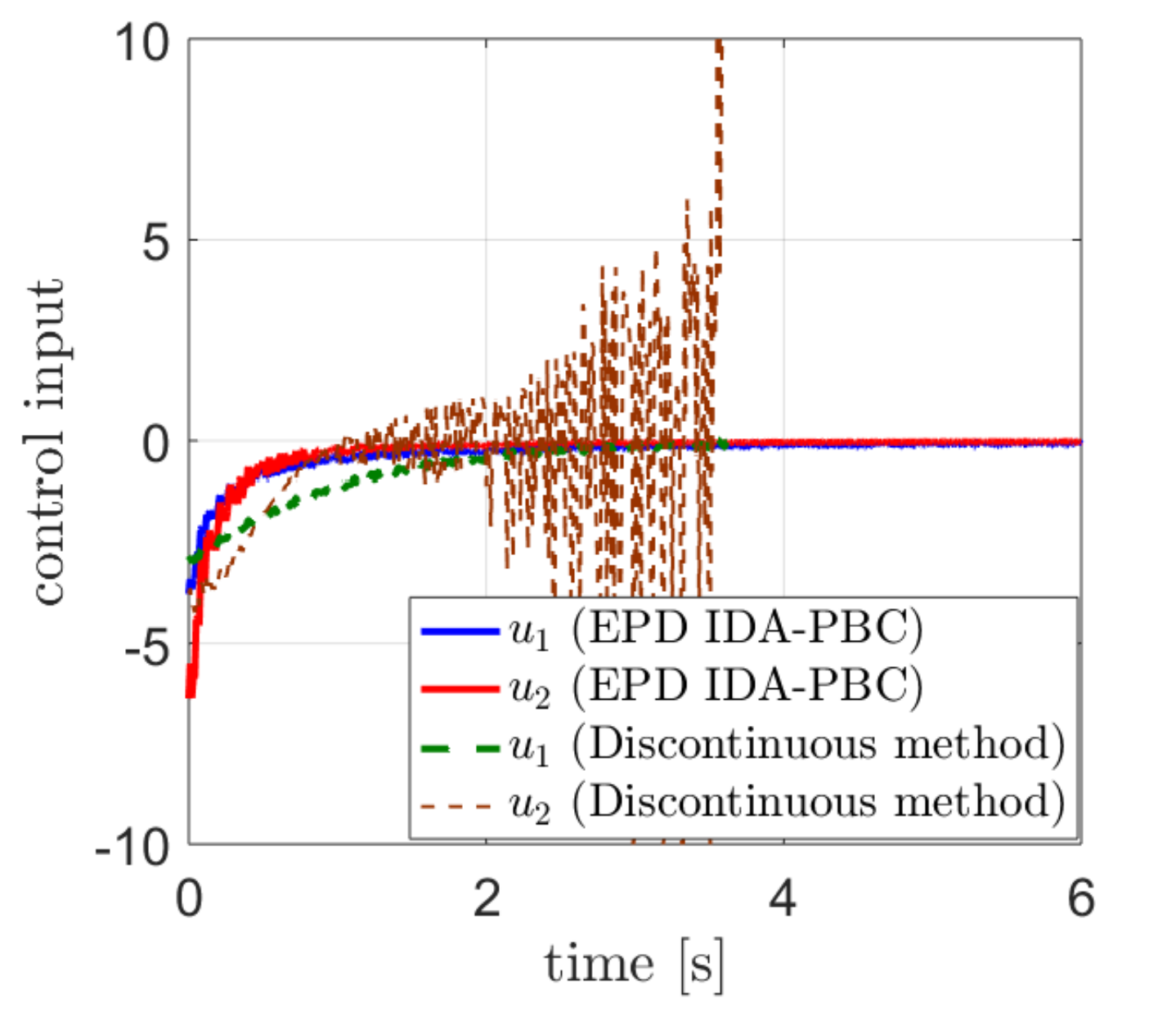}
    \label{fig:3-4}
}
\caption{Robustness evaluation of EPD IDA-PBC and Astolfi's controller for the nonholonomic integrator}
\label{fig:simulation3}
\end{figure}


\begin{figure}[h]
\centering
\subfloat[$\gamma =0.5, \beta_\ell =0.5, x(0)=(0.5, 1, 0.1, 0.5)^\top$]{
    \includegraphics[width=5cm,height=4cm]{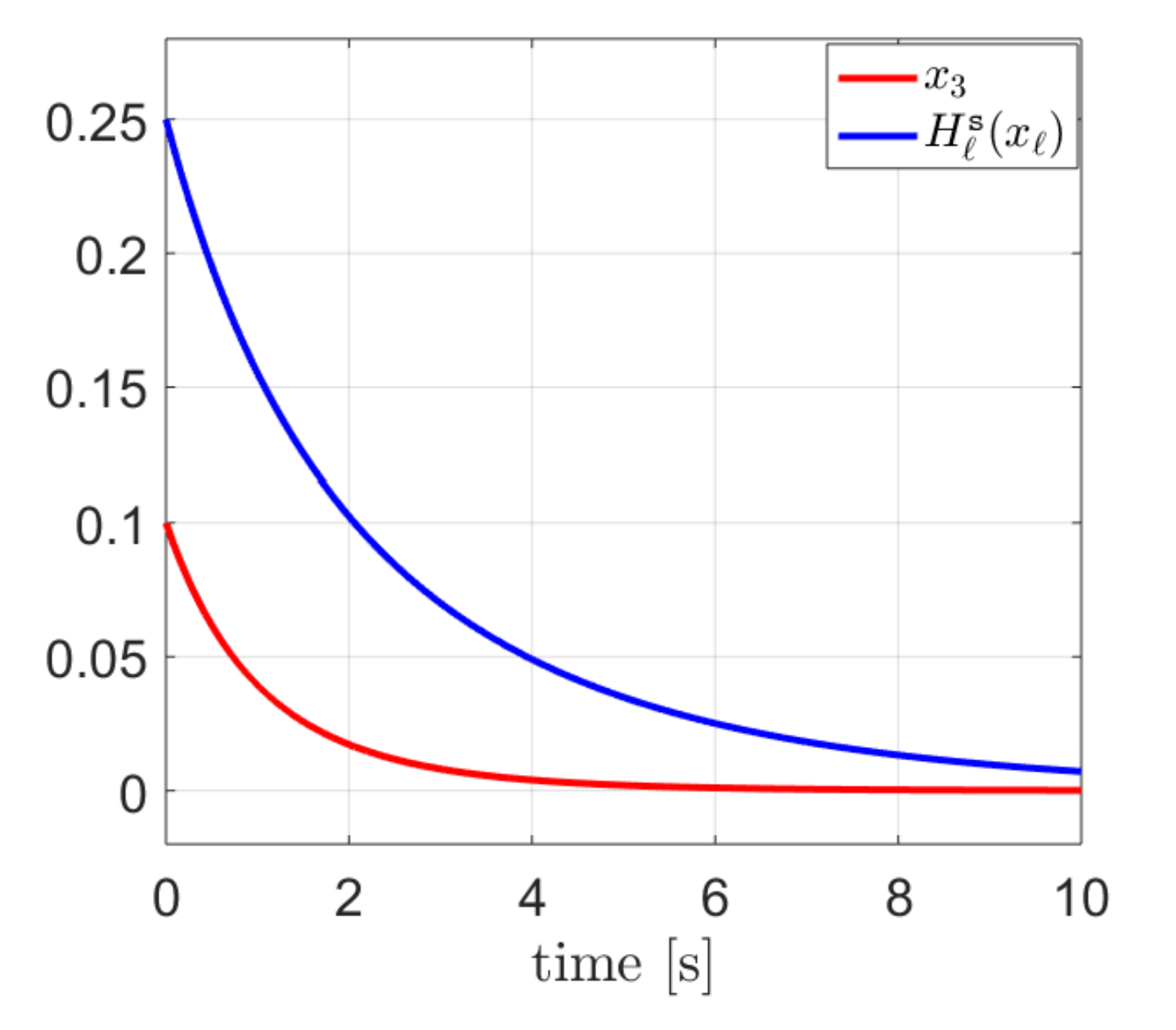}
    \label{fig:4-1}
}
\subfloat[$\gamma =0.5, \beta_\ell =0.5, x(0)=(0.5, 1, 0.1, 2)^\top$]{
    \includegraphics[width=5cm,height=4cm]{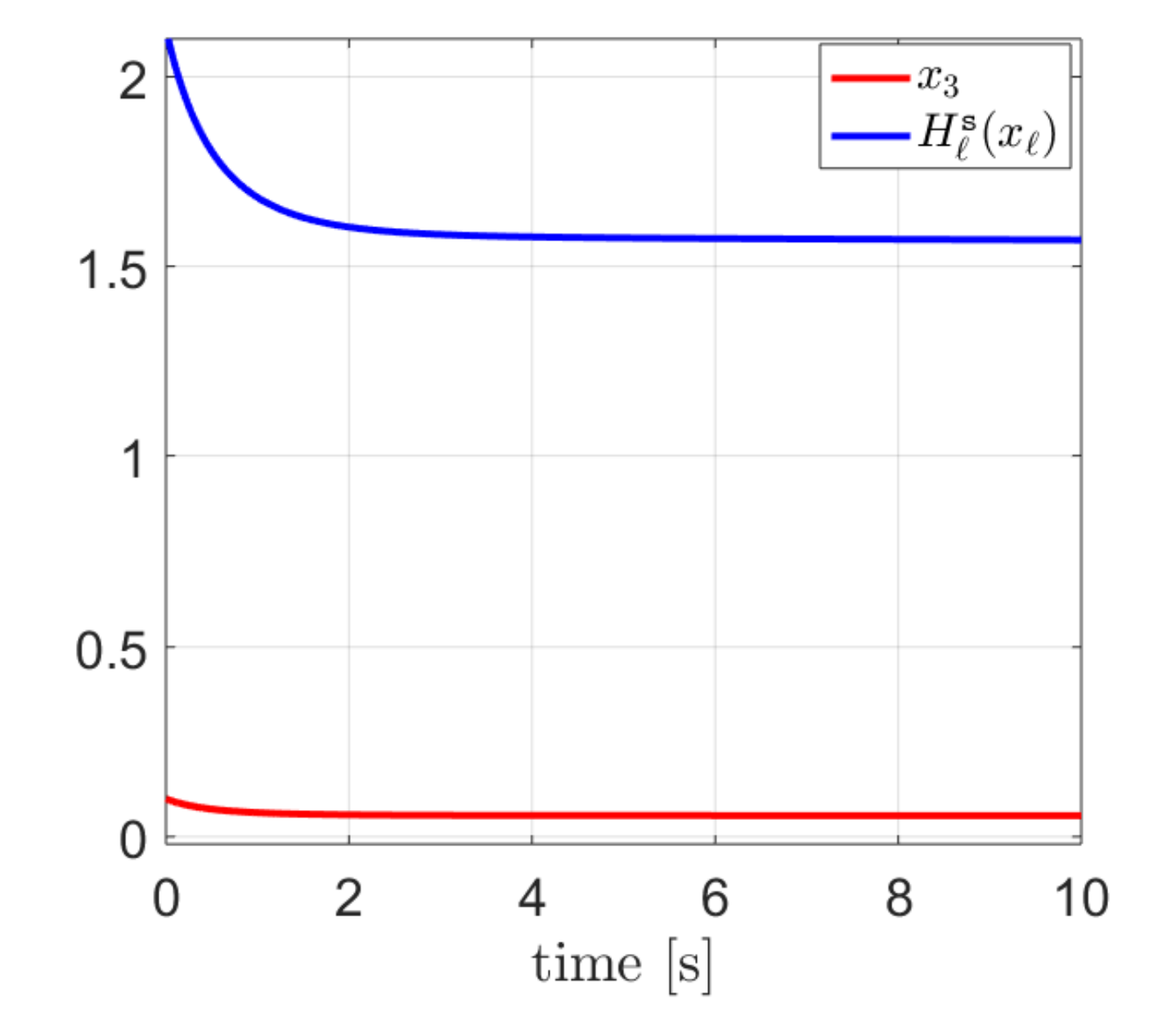}
    \label{fig:4-2}
}
\\
\subfloat[$\gamma =5, \beta_\ell =0.5, x(0)=(0.5, 1, 0.1, 1)^\top$ ]{
    \includegraphics[width=5cm,height=4cm]{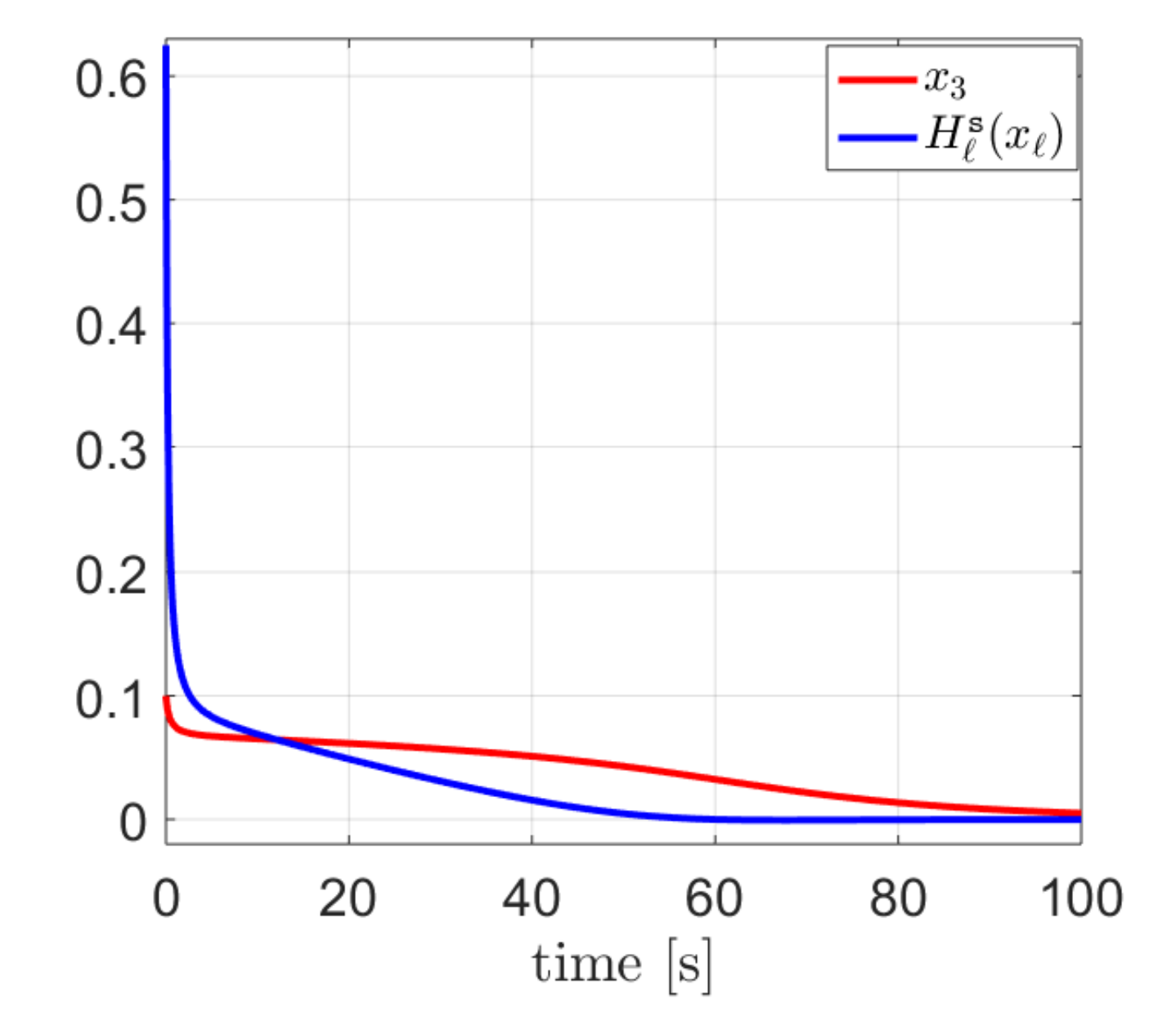}
    \label{fig:4-3}
}
\subfloat[$\gamma =50, \beta_\ell =0.5, x(0)=(0.5, 1, 0.1, 1)^\top$]{
    \includegraphics[width=5cm,height=4cm]{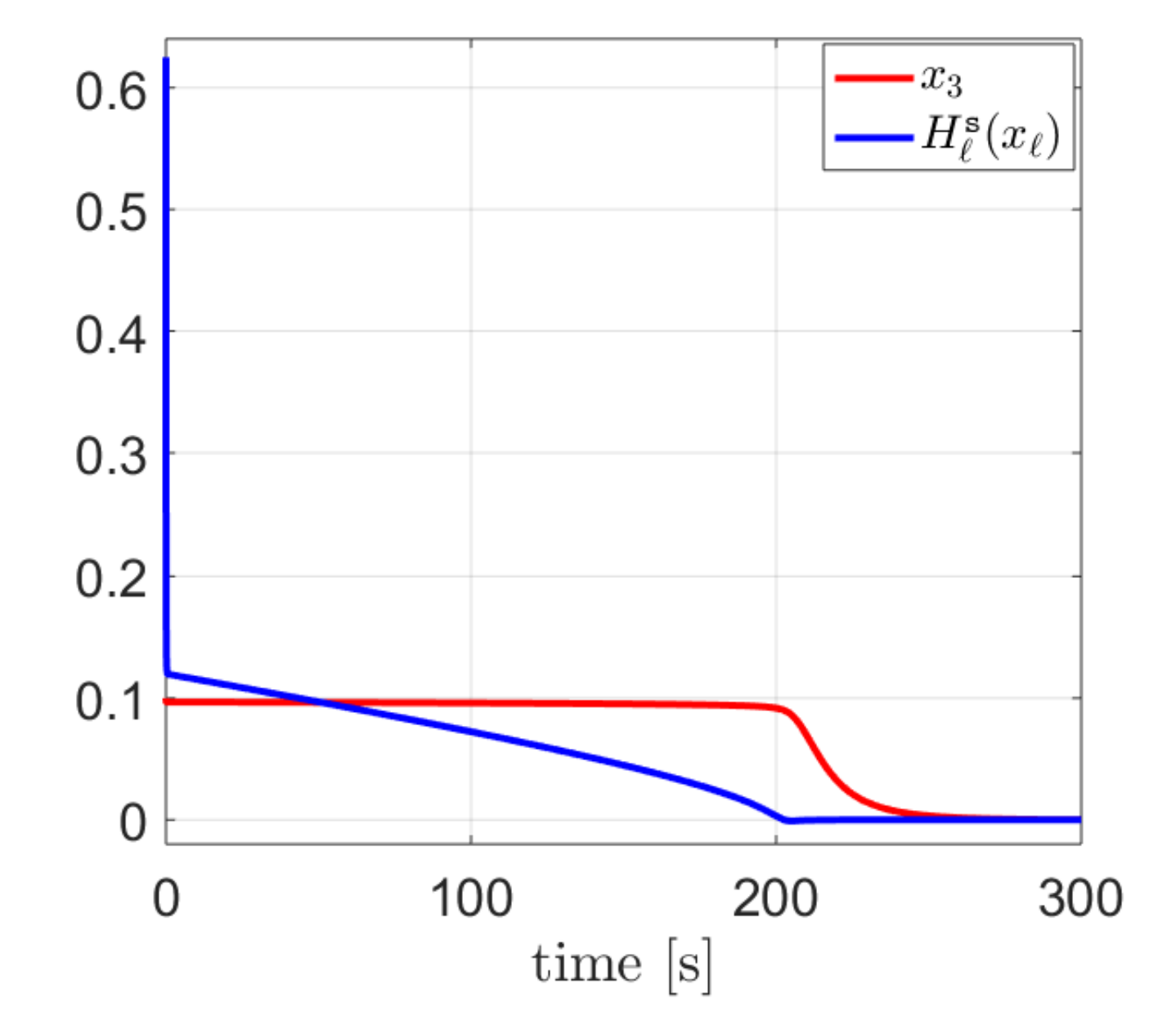}
    \label{fig:4-4}
}
\caption{EPD IDA-PBC of the nonholonomic system with chained structure and $n=4$ }
\label{fig:simulation4}
\end{figure}

\section{Concluding remarks}
\label{sec6}
We propose in this paper a variation of the well-known IDA-PBC design methodology, called EPD IDA-PBC, that is suitable for the problem of regulation of nonholonomic systems. Two asymptotic regulation objectives are considered: drive to \emph{zero} the state or drive the systems \emph{total energy} to a desired constant value. In both cases, the objectives are achieved excluding a set of inadmissible initial conditions. The main feature of this approach is that, in contrast with the existing methods reported in the literature, it yields smooth, time-invariant state-feedbacks that, in principle, have a better transient performance. This fact is illustrated via simulations.

We should also point out that in the state regulation case the zero equilibrium point is rendered stable in the sense of Lyapunov, but not asymptotically stable. We also prove in Lemma \ref{lem1} that, under the conditions of Proposition \ref{pro1}, the set of inadmissible initial conditions contains the origin. On the other hand, as indicated in Remark \ref{rem4}, for the case of energy regulation, convergence to a point ensuring the objective is achieved rendering the zero equilibrium unstable.

Current research is under way to sharpen our result for high-dimensional systems in chained structure. In particular, we are investigating alternative solutions to the matching equation \eqref{pde_general} via the proposition of different interconnection and damping matrices.

\section*{Acknowledgements}
The authors would like to express their gratitude to Prof Andrew Teel, for his careful reading of our manuscript, and many thoughtful comments that helped improve its clarity. The first author would like to thank Chi Jin (University of Waterloo) for fruitful discussions.

\appendix
\section{Proof of Lemma \ref{lem1}}
\lab{appa}
%
The proof is established by contradiction, that is, showing that if zero is not in the \emph{closure} of $\cal{I}$, then we contradict Brockett's necessary condition for asymptotic stabilization of the origin of nonholonomic systems \cite{BRO}.

In  Proposition 1 we prove that, if conditions {\bf C1-C4} hold, for all $x(0) \notin {\cal I}$, we have that
$$
\lim_{t\to\infty} x(t) =0,
$$
{\em regardless} of whether $\cal{I}$ contains the origin or not.

Denote $\calb_\delta: = \{x\in \rea^n | ~|x| \le \delta\}$. If  $\cl(\mathcal{I})$ {\em does not} contain the origin, we can always find a (small) constant $\delta_m>0$, such that
$$
\calb_{\delta_m} \cap \cal{I} = \varnothing.
$$

From the facts that $\nabla H(0) =0$ and $\nabla^2 H (x) >0$, as well as $\dot{H}\le0$, we have that the sublevel sets
$$
\cale_\epsilon: =\{x\in\rea^n | ~ H(x) \le \epsilon\}
$$
are invariant {\em for any} $\epsilon \ge 0$. Given $\delta_m>0$, we can always find small $\epsilon_m>0$ such that
$$
\cale_{\epsilon_m} \subset \calb_{\delta_m},
$$
consequently $\cale_{\epsilon_m} \cap {\cal I} = \varnothing$. Thus, we have proven the following implication
$$
x(0) \in \cale_{\epsilon_m} \quad \Rightarrow \quad
\left\{
\begin{aligned}
 & x(t)  \in \cale_{\epsilon_m}, \quad \forall t \ge 0\\
&  \lim_{t\to\infty}x(t) =0.
\end{aligned}
\right.
$$
The invariance of the set $\cale_{\epsilon_m}$ proves Lyapunov stability of the zero equilbrium, that together with the second attractivity condition,  implies {\em asymptotic stability} of the zero equilibrium of the closed-loop system. Since the controller of Proposition 1 is smooth and time invariant, this contradicts Brockett's necessary condition, completing the proof.
\qed

%

\end{document}